\documentclass[a4paper,twocolumn,11pt,unpublished]{quantumarticle}
\pdfoutput=1

\usepackage[utf8]{inputenc}
\usepackage[T1]{fontenc}
\usepackage[english]{babel}
\usepackage{amsmath,amssymb,mathtools}
\usepackage{booktabs}
\usepackage{tabularx}
\usepackage{array}
\usepackage{xcolor}
\usepackage{enumitem}
\usepackage{graphicx}
\usepackage{placeins}
\usepackage{microtype}
\usepackage{amsthm}
\usepackage{ulem}
\usepackage{hyperref}
\hypersetup{
  colorlinks=true,
  linkcolor=blue!45!black,
  citecolor=blue!45!black,
  urlcolor=blue!55!black
}
\graphicspath{{./}}


\newcommand{\frontier}{\texttt{Frontier}}

\newcommand{\A}{\mathcal{A}}

\newcommand{\GF}{\mathbb{F}_2}
\DeclareMathOperator*{\argmax}{arg\,max}
\newtheorem{proposition}{Proposition}
\newtheorem{corollary}{Corollary}

\begin{document}

\title{Approximating optimal decoding of quantum LDPC codes with narrow frontiers}

\author[AL]{Anthony Leverrier}
\email{anthony.leverrier@inria.fr}
\author[RU]{R{\"u}diger Urbanke}
\email{rudiger.urbanke@epfl.ch}
\affiliation[AL]{Inria Paris, France}
\affiliation[RU]{EPFL, Lausanne, Switzerland}

\begin{abstract}
We introduce the \frontier{} decoder, a pruned dynamic-programming decoder for sparse quantum decoding problems.  \frontier{} processes error variables in a chosen order, merges prefixes with the same residual syndrome and logical label, and approximates logical-coset posterior masses by retaining only a narrow scored frontier.  Without pruning, the recursion is exact ordered inference with exponential complexity. 

In the code-capacity setting, the decoder reaches thresholds close to optimal for the surface code and the color code. In the circuit-level noise model, it achieves state-of-the-art performance with a very small average retained list size: less than 100 for the gross code $[[144,12,12]]$ at a physical error rate of $0.001$. When the list size is constant, the decoder has linear complexity, suggesting the possibility of low-latency implementations.
\end{abstract}

\maketitle

Quantum error correction requires decoders that infer a recovery operation from a noisy syndrome record.  
For two-dimensional codes such as the surface code, decoders typically exploit the strong geometric structure of the problem~\cite{Dennis2002,Edmonds1965,Fowler2012Surface,Delfosse2021UnionFind,Bravyi2014TensorNetwork,Higgott2022PyMatching,HiggottGidney2025SparseBlossom}.  For general quantum LDPC codes, the Tanner graph is sparse but need not be geometrically local~\cite{MacKay2004SparseGraphQuantum,TillichZemor2014HGP,Breuckmann2021QLDPC,PanteleevKalachev2022GoodQLDPC,LeverrierZemor2022QuantumTanner,Bravyi2024BB}.  This motivates decoders that do not rely on geometry but still exploit sparsity. Many approaches have been explored in the literature: message-passing decoders with postprocessing~\cite{Roffe2020BP_OSD,Hillmann2025LSD,DeMarti2024BPOTF}, guided decimation~\cite{Yao2024BPGD,Gong2024GDG}, branch or beam-style variants~\cite{DeMarti2024ClosedBranch,YeWeckerDelfosse2025BeamSearch}, ensembling~\cite{Muller2025RelayBP,Koutsioumpas2025AutDEC}, modifying the Tanner graph to make it amenable to message passing~\cite{Maan2026GARI}, and short-path search in large graphs~\cite{AghababaieBeni2025Tesseract,Ott2025DecisionTree}. Machine-learning decoders are also rapidly improving and achieve state-of-the-art performance when they can be trained~\cite{Bausch2024AlphaQubit,Blue2025BBTransformer,Gu2026Cascade}.

We describe the \frontier{} decoder, motivated by the logical maximum-likelihood rule, i.e., the decision rule that minimizes the logical error probability. Given the observed syndrome, this rule assigns to each logical label the total posterior mass of all compatible error configurations in that logical class, and chooses the label with largest mass.

\frontier{} is an ordered syndrome decoder that uses pruning. Fix an order of the fault variables. After the first \(t\) variables have been processed, this order separates the variables into a processed prefix and an unprocessed suffix; we call this separation the cut at time \(t\). A syndrome check is active at this cut if it touches at least one variable in the prefix and at least one variable in the suffix. The active boundary is the set of all such checks.

For an active check, the prefix has already contributed part of the final syndrome value, but the suffix may still change it. For each prefix, the decoder records the residual syndrome, i.e., the syndrome contribution that the suffix must still supply on the active boundary, together with the logical label accumulated by the prefix. This pair is the boundary state of the prefix. A boundary state retained by the decoder is called a survivor, and the current list of survivors after the first $t$ variables have been processed is the frontier \(\mathcal{F}_t\). If two prefixes have the same boundary state, then they have the same feasible suffixes. Their probability masses can therefore be merged by summing.

A check is completed when all variables touching it lie in the processed prefix. Its syndrome value can then no longer change. At that moment, the decoder compares the value produced by the prefix with the observed syndrome bit. Prefixes with the wrong value are discarded; for prefixes with the right value, the completed check no longer needs to be stored.

Each survivor is assigned a score \(S=P+ C\). Here \(P\) is the log of the total probability mass of all prefixes merged into that survivor. The term \(C\) is a heuristic suffix-compatibility score, estimating how likely the unprocessed variables are to supply the remaining residual syndrome on the active boundary. The decoder retains only survivors whose score is within \(\Delta\) of the best score, subject to a hard cap of \(K\) survivors.

After all variables have been processed, every syndrome check has been completed. The decoder groups the retained states by logical class, sums their masses within each class, and outputs the class with the largest retained mass. Without pruning, the same recursion is exact variable elimination along the chosen order. With pruning, it becomes a small-list approximation to logical-class posterior inference.

The main empirical result is that this short-list dynamic program, with the simple deadline ordering described below and a local suffix-compatibility score, can compete with much more specialized search decoders on several benchmark families, while using very small average numbers of retained states.

The main contributions are:
\begin{enumerate}[leftmargin=*]
\item We derive an ordered dynamic program for logical maximum-likelihood decoding. For a fixed variable order, the unpruned recursion exactly computes the posterior mass of each logical class by merging prefixes with the same active residual syndrome and logical label.

\item We turn this exact recursion into the practical \frontier{} decoder. The approximation uses deadline ordering, a local suffix-compatibility score, score-gap pruning, a hard cap \(K\), and final aggregation by logical class.

\item We test \frontier{} on several sparse quantum decoding problems: rotated surface codes and color codes in the code-capacity model, and rotated surface code and BB codes under circuit-level noise. The results show near-optimal threshold behavior in the code-capacity setting and competitive performance for circuit-level noise with small retained lists.

\item We use the retained terminal lists to diagnose decoder failures. We distinguish support loss, where pruning removes the correct logical class, from terminal ranking failure, where the correct class remains in the list but is not ranked first. We find that increasing the score gap \(\Delta\) reduces support loss until terminal ranking becomes the limiting step.
\end{enumerate}

The rest of the paper is organized as follows.  Section~\ref{sec:related} positions \frontier{} relative to related decoders.  Section~\ref{sec:model} states the decoding model, and Section~\ref{sec:decoder} gives the ordered-frontier recursion and presents the \frontier{} decoder.  Section~\ref{sec:design} discusses ordering and committee variants.  Section~\ref{sec:results} reports the code-capacity and circuit-level noise experiments. The final sections discuss potential improvements and conclusions.

\section{Related decoders}
\label{sec:related}

The closest relatives of \frontier{} are decoders that keep an explicit boundary state or a small explicit set of candidates.  Trellis and variable-elimination decoders process variables in an order and summarize the past by the boundary information needed by the unprocessed suffix~\cite{Forney1973Viterbi,Bahl1974BCJR,Kschischang2001FactorGraphs,Dechter1999BucketElimination}.  Quantum trellis decoders make this connection explicit for stabilizer and convolutional codes~\cite{OllivierTillich2006Trellises,PelchatPoulin2013DegenerateViterbi,SaboAloshiousBrown2024TrellisQudit}. The unpruned \frontier{} recursion belongs to this family: once an order is fixed, prefixes with the same active residual and logical label have the same feasible suffixes. Their probability masses can therefore be summed, so degenerate representatives are collapsed into one boundary state rather than kept as separate candidates.

Tensor-network decoders are close at the level of inference: they approximate
logical-coset partition functions by contracting a graphical model while
controlling the size of an intermediate boundary through a bond-dimension
cutoff~\cite{Bravyi2014TensorNetwork,FerrisPoulin2014TensorQEC,Chubb2021GeneralTNDecoding,PiveteauChubbRenes2024TNBeyond2D}.
In a tensor-network decoder, the numerical width is controlled by the bond
dimension \(\chi\). In \frontier{}, the corresponding empirical width is the
retained frontier size \(|\mathcal F_t|\): the cap \(K\) is only a hard maximum,
while \(\Delta\) makes the realized width adaptive from cut to cut. The boundary
itself is different. It is not a geometric contraction boundary in a chosen
lattice embedding, but the set of parity-check or detector rows crossing the
ordered column cut, with each survivor labelled by an active residual syndrome
and a logical value. Thus \frontier{} is geometry agnostic in the sense
that it applies directly to an ordered parity-check or DEM matrix.

Tesseract and decision-tree decoders are also close in spirit because they search over sparse candidate corrections while pruning a rapidly growing graph of possibilities~\cite{AghababaieBeni2025Tesseract,Ott2025DecisionTree}.  An important distinction is merging: \frontier{} merges all prefixes that have the same active boundary residual and logical label before pruning the merged frontier.  This makes it a pruned ordered dynamic program rather than a decoder targeting the most likely error.

\section{Decoding model}
\label{sec:model}

We write the decoding problem first as a binary linear syndrome problem.  A fault vector
\begin{equation}
    x\in\GF^n
\end{equation}
produces syndrome
\begin{equation}
    s = Hx,
    \qquad
    H\in\GF^{m\times n}.
\end{equation}
Here \(H\) denotes either a parity-check matrix, or a Detector Error Model (DEM) matrix, depending on the application.  In quantum decoding, a logical matrix
\begin{equation}
    L\in\GF^{k\times n}
\end{equation}
maps the same fault vector to a logical class
\begin{equation}
    \ell = Lx.
\end{equation}
Here $k$ is the dimension of the logical space of interest: it is the number of logical qubits when we consider only bit flips, for instance, or twice as large if we consider both bit and phase flips.
Throughout, \(n\) is the number of local factors, equivalently the number of columns of the binary parity-check or DEM matrix.
The probability mass of the logical class \(\ell\) given syndrome $s$  is
\begin{equation}
    Z_\ell(s)=
    \sum_{\substack{x:\ Hx=s\\ Lx=\ell}}
    \Pr(x).
    \label{eq:coset_partition}
\end{equation}
The posterior probability of logical class \(\ell\) is
\(Z_\ell(s)/\sum_{\ell'} Z_{\ell'}(s)\), so maximizing posterior probability is
equivalent to maximizing \(Z_\ell(s)\).
The same notation also covers finite local factors.  In general, variable \(j\) has a local alphabet \(\A_j\), prior probabilities \(p_j(x_j)\), a syndrome \(h_j(x_j)\in\GF^m\), and a logical value \(\ell_j(x_j)\in\GF^k\).  The binary model above is the special case \(\A_j=\GF\), \(h_j(1)=H_{\cdot j}\), \(h_j(0)=0\), and similarly for the logical values.  Given an observed syndrome \(s\), the decoder seeks the logical class \(\ell\) with the largest posterior mass among local outcomes \(x=(x_1,\ldots,x_n)\in\A_1\times\cdots\times\A_n\) satisfying
\begin{equation}
    \bigoplus_{j=1}^n h_j(x_j)=s,
    \qquad
    \bigoplus_{j=1}^n \ell_j(x_j)=\ell .
\end{equation}
This is the factor-model version of the same \(H,s,L,\ell\) problem.
All notions used below are therefore defined directly from the ordered columns \((h_j,\ell_j)\), the target syndrome, and the priors; no geometric embedding of the checks or variables is required.

In the code-capacity setting, the local alphabet is \(\A_j=\{0,1\}\) when decoding one type of error in a binary CSS code. For depolarizing noise, we can instead use one Pauli factor per qubit, with alphabet \(\A_j=\{I,X,Y,Z\}\) and priors \((1-p,p/3,p/3,p/3)\). This keeps the \(X/Z\) correlation induced by \(Y\) errors.

For circuit-level noise, the variables can be binary detector-error mechanisms or higher-alphabet circuit-location factors. A two-qubit depolarizing fault, for example, can be represented as a single local factor with outcomes \(II\) and the 15 non-identity two-qubit Paulis, with respective priors \(1-p\) and \(p/15\).

The rows of \(H\) correspond to code checks or to circuit-level detectors. When the distinction is not important, we use the words check and detector interchangeably.

\section{The \frontier{} decoder}
\label{sec:decoder}

\subsection{Exact active-boundary states}

Fix an order \(\pi=(j_1,\ldots,j_n)\) of the columns or factors, and write
\(\pi_{\le t}=(j_1,\ldots,j_t)\).  After the first \(t\) entries in this order
have been processed, let \(\Gamma_t\) be the ordered set of active checks:
these are the checks that touch at least one processed factor and at least one
unprocessed factor. See Fig.~\ref{fig:frontier_schematic} for an illustration.

A check is completed once all factors touching it have been processed. From
that point on, its syndrome value can no longer change. Therefore, any prefix
that gives the wrong value for a completed check is discarded. Checks that touch
only unprocessed factors do not yet depend on the prefix and do not need to be
stored.

For a binary partial path
\(x_{\le t}=(x_{j_1},\ldots,x_{j_t})\in\GF^t\) on the processed variables, the
active residual syndrome is
\begin{equation}
    q_t
    =
    s_{\Gamma_t}
    \oplus
    H_{\Gamma_t,\pi_{\le t}}x_{\le t}
    \quad\in\GF^{|\Gamma_t|}.
    \label{eq:frontier_residual}
\end{equation}
For finite-alphabet factors, the same prefix notation means
\(x_{\le t}=(x_{j_1},\ldots,x_{j_t})\in
\A_{j_1}\times\cdots\times\A_{j_t}\), and the matrix product is replaced by
the sum mod 2 of local syndromes.

A boundary state is the pair
\begin{equation}
    \sigma_t=(q_t,\lambda_t),
\end{equation}
where
\begin{equation}
    \lambda_t
    =
    \bigoplus_{r=1}^t \ell_{j_r}(x_{j_r})
\end{equation}
is the logical label accumulated by the prefix. The frontier table stores a
prefix log mass \(P_t(\sigma_t)\) for each reachable boundary state. It may also
store one arbitrary representative prefix for each state, only to output a
correction at the end.

Two prefixes with the same active residual syndrome and the same accumulated
logical label have identical feasible suffix sets. They differ only in the
probability mass already accumulated by the prefix. They can therefore be merged
by summing their masses.

This merging is also one point at which quantum degeneracy can help the
computation. A decoder that stores individual representative errors can spread
the posterior mass of one logical class over many competing candidates. In
\frontier{}, prefixes that induce the same active residual syndrome and logical
label are collapsed before pruning. Their masses add to \(P_t(\sigma_t)\), and
the merged state occupies only one entry of the candidate frontier.

\begin{proposition}[Unpruned exactness]
Fix a finite factor model with local alphabets \(\A_j\), priors \(p_j(x_j)\),
syndrome vectors \(h_j(x_j)\in\GF^m\), and logical values
\(\ell_j(x_j)\in\GF^k\). For any variable order \(\pi\), the unpruned
ordered-frontier recursion with exact merging by active residual and logical
label computes the logical-class masses
\begin{equation}
  Z_\lambda(s)
  =
  \sum_{\substack{x:\ \oplus_j h_j(x_j)=s\\ \oplus_j \ell_j(x_j)=\lambda}}
    \prod_j p_j(x_j).
\end{equation}
\end{proposition}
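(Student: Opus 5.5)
We prove the statement by induction on \(t\), using an invariant that says the frontier table at time \(t\) holds exactly the prefix masses grouped by boundary state. For a prefix \(x_{\le t}\in\A_{j_1}\times\cdots\times\A_{j_t}\), call it \emph{admissible} if for every check \(c\) completed at cut \(t\) we have \(\bigoplus_{r\le t} h_{j_r}(x_{j_r})_c = s_c\). We claim that after step \(t\) the unpruned table satisfies, for every boundary state \(\sigma=(q,\lambda)\),
\begin{equation}
  e^{P_t(\sigma)} \;=\; \sum_{\substack{x_{\le t}\ \text{admissible}\\ (q_t(x_{\le t}),\lambda_t(x_{\le t}))=\sigma}} \prod_{r\le t} p_{j_r}(x_{j_r}),
\end{equation}
with states of zero mass simply absent. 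The base case \(t=0\) has no processed factors, so \(\Gamma_0=\emptyset\), the only state is \((\emptyset,0)\), and its mass is the empty product \(1\).

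For the inductive step, fix an admissible prefix \(x_{\le t}\) and an outcome \(a\in\A_{j_{t+1}}\). The first task is to show that the new state \((q_{t+1},\lambda_{t+1})\) and the admissibility of the extended prefix depend on \(x_{\le t}\) only through \(\sigma_t\). Clearly \(\lambda_{t+1}=\lambda_t\oplus\ell_{j_{t+1}}(a)\). For the residual, split the checks in \(\Gamma_t\cup\operatorname{supp}\) of column \(j_{t+1}\) into three kinds.
\begin{itemize}
\item A check in \(\Gamma_t\) that stays active has residual \((q_t)_c\oplus h_{j_{t+1}}(a)_c\).
\item A check newly activated by \(j_{t+1}\) has not been touched by the prefix, so its residual is \(s_c\oplus h_{j_{t+1}}(a)_c\).
\item A check completed at step \(t+1\) must satisfy \((\text{old residual})\oplus h_{j_{t+1}}(a)_c=0\), where the old residual is \((q_t)_c\), or \(s_c\) if \(j_{t+1}\) is the check's only factor.
\end{itemize}
A check that was already completed is unaffected, since \(j_{t+1}\) does not touch it. Checks that touch neither side contribute nothing.

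Hence there is a deterministic transition \(T(\sigma_t,a)\) that returns either a state \(\sigma_{t+1}\) or "reject". This is precisely the recursion \(e^{P_{t+1}(\sigma')}=\sum_{(\sigma,a):T(\sigma,a)=\sigma'} e^{P_t(\sigma)}p_{j_{t+1}}(a)\). Substituting the inductive hypothesis and regrouping the sum over \((x_{\le t},a)=x_{\le t+1}\) gives the invariant at \(t+1\). The regrouping uses distributivity, together with the fact that admissibility at \(t+1\) is equivalent to admissibility at \(t\) plus acceptance by \(T\).

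At \(t=n\), every check touching some factor is completed, so \(\Gamma_n=\emptyset\) and admissibility means \(\bigoplus_j h_j(x_j)=s\) on those checks. Checks touching no factor need a separate remark: if \(s_c=1\) on such a check then \(Z_\lambda=0\) for all \(\lambda\), and the recursion should be read as rejecting everything, which is a minor convention we will state explicitly. The final table is then indexed by \(\lambda\) alone, and \(e^{P_n((\emptyset,\lambda))}=Z_\lambda(s)\), which is the claim.

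We expect the main obstacle to be bookkeeping rather than any genuine difficulty. The delicate point is carefully justifying that the residual restricted to \(\Gamma_t\) carries all the information the suffix needs. This relies on the definition of activeness: every check that a future factor can touch and that the prefix has already touched lies in \(\Gamma_t\). Checks the prefix never touched have residual \(s_c\), which is independent of the prefix. We will state this as the lemma that prefixes with equal \(\sigma_t\) have identical feasible suffix sets, with the same final logical shift. That lemma is exactly what licenses the merging step.
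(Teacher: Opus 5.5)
Your proposal is correct and follows the same route as the paper. Both argue by induction over $t$, using the fact that the boundary state $(q_t,\lambda_t)$ determines all future feasibility, so merging prefixes by summing their masses preserves the exact prefix mass of each state. Your explicit invariant, the deterministic transition $T(\sigma,a)$, and your remark about checks that touch no factor (an edge case the paper's ``all checks have been completed'' glosses over) make the paper's sketch fully rigorous.
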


\begin{proof}
After \(t\) variables, the active residual syndrome records exactly the part of
the target syndrome that must still be supplied by the unprocessed variables on
the active checks. Completed checks have fixed values; prefixes that give the
wrong value on such a check have already been discarded. Checks that touch only
unprocessed variables are independent of the prefix.

It follows that two prefixes with the same active residual syndrome and the
same accumulated logical label have the same feasible suffix set. They differ
only in their accumulated probability mass. Merging them by summing their
masses therefore preserves the exact prefix mass for every reachable boundary
state.

Induction over \(t\) shows that the unpruned frontier table stores the exact
prefix mass for each boundary state. After all variables have been processed,
all checks have been completed. Grouping the remaining terminal states by
logical label gives exactly the desired logical-class masses \(Z_\lambda(s)\).
\end{proof}

\begin{corollary}[Exact frontier width]
If \(w_t=|\Gamma_t|\) active binary syndrome constraints are present after step
\(t\), and \(k\) logical bits are tracked, then the exact merged frontier
contains at most
\begin{equation}
    2^{w_t+k}
\end{equation}
boundary states at step \(t\).
\end{corollary}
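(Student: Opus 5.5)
The plan is a counting argument on the codomain of the boundary-state map. By the construction preceding the unpruned-exactness proposition, every entry of the merged frontier table at step \(t\) is keyed by a boundary state \(\sigma_t=(q_t,\lambda_t)\), and exact merging guarantees that no two entries share a key. So the table size is at most the number of possible keys, and I would bound it in three steps.

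First, identify where the two components live. The residual \(q_t\) defined in \eqref{eq:frontier_residual} is a vector in \(\GF^{|\Gamma_t|}=\GF^{w_t}\). This holds for binary columns and equally for finite-alphabet factors, since local syndromes are summed mod 2 and then restricted to the rows in \(\Gamma_t\). The logical label \(\lambda_t\) is a mod-2 sum of vectors \(\ell_{j_r}(x_{j_r})\in\GF^k\), so it lies in \(\GF^k\).

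Second, note that the table may well carry information besides the key. Completed checks have been checked and then dropped, and checks touching only unprocessed factors are not stored. These facts are exactly what the proof of the unpruned-exactness proposition relied on. The representative prefix and the log mass \(P_t\) are data attached to a key, not part of it. So the map from table entries to keys in \(\GF^{w_t}\times\GF^k\) is injective.

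Third, conclude that the number of entries is at most \(|\GF^{w_t}\times\GF^k|=2^{w_t+k}\). Reachability and the discarding of prefixes inconsistent with completed checks can only shrink this set, which is why the statement gives an upper bound rather than an equality.

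There is no real obstacle here. The only point that needs care is being explicit that the key consists of \emph{only} the active residual and the logical label. One might worry that syndrome values on completed checks must also be remembered. They need not be, because every surviving prefix agrees with \(s\) on those rows, and so they contribute no multiplicity.
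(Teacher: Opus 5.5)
Your argument is correct and matches the paper's reasoning: the corollary is stated without a separate proof, as an immediate consequence of exact merging, which makes each frontier entry a distinct key $(q_t,\lambda_t)\in\mathbb{F}_2^{w_t}\times\mathbb{F}_2^{k}$. Your remark that completed checks need no storage, because every surviving prefix agrees with $s$ on them, is the same fact used in the unpruned-exactness proof.
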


For circuit-level noise, this bound quickly becomes astronomically large.  Under the deadline order used in our experiments, we find \(\max_t w_t=134\) for the gross code $[[144,12,12]]$ on both memory-\(X\) and memory-\(Z\), and \(k=12\), giving an unpruned bound of \(2^{146}\) boundary labels.  The rotated-surface circuit-level DEM reaches the same exponent at \(d=11\), where \(\max_t w_t=145\) and \(k=1\).  Thus exact unpruned frontier contraction is not a viable decoder for these instances.

\begin{figure*}[!tp]
\centering
\includegraphics[width=0.94\textwidth]{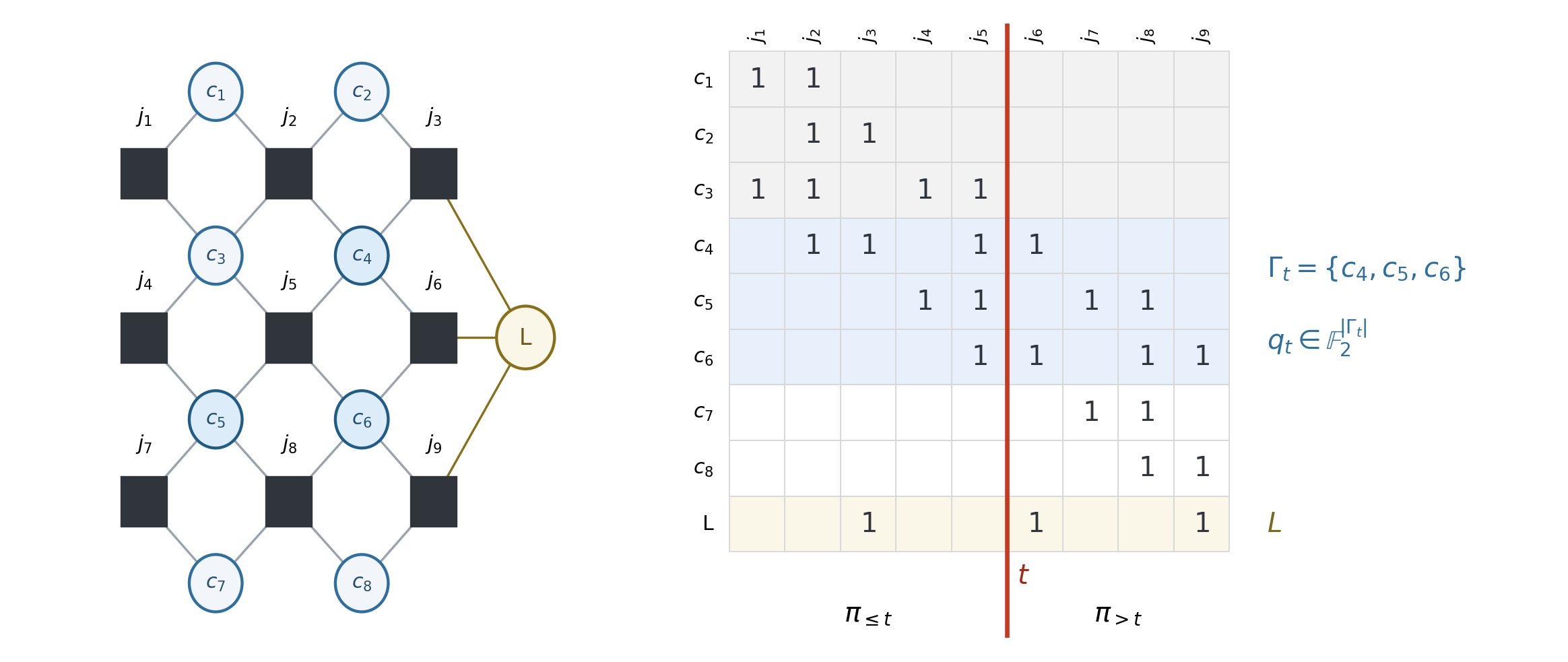}
\caption{Standard surface-code factor model and its ordered-matrix representation.  Left: one CSS side of a small planar surface-code patch, with local factors \(x_{j_i}\), bulk checks of degree four, boundary checks of degree two, and one logical-label row \(L\).  Right: the same incidence pattern written as the ordered matrix \(H\), with columns in the order \(\pi=(j_1,\ldots,j_n)\).  After step \(t\), the processed columns are \(\pi_{\le t}\), the unprocessed columns are \(\pi_{>t}\), and the active check set \(\Gamma_t\) is exactly the set of blue rows, namely the rows touching both sides of the cut.  The active residual \(q_t=s_{\Gamma_t}\oplus H_{\Gamma_t,\pi_{\le t}}x_{\le t}\in\mathbb{F}_2^{|\Gamma_t|}\) is therefore a vector with one residual syndrome bit for each blue row; the decoder stores \(q_t\) together with the accumulated logical label \(\lambda_t\).}
\label{fig:frontier_schematic}
\end{figure*}

\subsection{State transition}

Suppose that the next factor is \(j_t\). Each retained survivor from the current frontier \(\mathcal F_{t-1}\) branches over all possible values \(x\in\A_{j_t}\). For each child, the decoder adds \(\log p_{j_t}(x)\) to the prefix log mass, XORs
the syndrome \(h_{j_t}(x)\) into the residual syndrome, and XORs the
logical value \(\ell_{j_t}(x)\) into the accumulated logical label.

The active residual syndrome is then restricted to the new active boundary
\(\Gamma_t\). If a check has just become completed, its value is now fixed. The
decoder compares this value with the observed syndrome bit. If the two disagree,
the child is discarded. If they agree, the completed check no longer needs to be
stored.

After these updates, children with identical active residual syndrome and
logical label are merged by summing their masses. The resulting merged table is
the candidate frontier before pruning.

\subsection{Suffix-compatibility score and score-gap pruning}

The pruning step is applied after branching, discarding children that fail
completed checks, and merging duplicate boundary states. Let
\(\widetilde{\mathcal F}_t\) be the merged candidate frontier at time \(t\),
before pruning. Let \(\mathcal F_t\) be the retained frontier after pruning. We
call \(|\mathcal F_t|\) the retained list size at time \(t\).

The retained list size is the \frontier{} analogue of a tensor-network
bond dimension. The parameter \(K\) is only a memory cap; the realized profile
\((|\mathcal F_t|)_t\), and especially its average and peak over a decoding
instance, is the effective boundary width actually used by the decoder. This is
why our simulations below report retained-list statistics instead of the cap \(K\). 
We also discuss in Fig.~\ref{fig:gross_dem_avg_retained} the distinction between the peak frontier size $\max_t |\mathcal{F}_t|$ and its average $\frac{1}{n} \sum_t |\mathcal{F}_t|$.

For a candidate state \(\sigma\in\widetilde{\mathcal F}_t\), let \(q(\sigma)\)
be its active residual syndrome and let \(P_t(\sigma)\) be its prefix log mass.
Thus \(P_t(\sigma)\) is the log of the total probability mass of all prefixes
that have been merged into \(\sigma\).

We also assign to each active residual \(q\) a heuristic score \(C_t(q)\). We
call \(C_t\) the suffix-compatibility score. It is used only for pruning. Its
purpose is to favor residual syndromes that appear easier to complete using the
unprocessed variables.

Candidates are ranked by the pruning score
\begin{equation}
    S_t(\sigma)=P_t(\sigma)+\alpha C_t(q(\sigma)).
    \label{eq:score}
\end{equation}
The parameter \(\alpha\) controls how much the suffix-compatibility score
influences pruning relative to the prefix mass. In our experiments we use
\(\alpha=0.8\).

The best pruning score would use the log probability that the unprocessed
variables supply the remaining residual syndrome on all active checks. Computing
that probability exactly is too expensive. Instead, we use a row-wise
approximation. For each active check \(i\), we compute the probability that the
unprocessed variables supply the required residual bit on that check. We then
add the corresponding log probabilities over active checks. The row probabilities \(\rho_{i,t}\) are single-row marginals. The approximation
is to add their logarithms as if the active checks were independent. They are
not independent in general, because one future variable can touch several active
checks. Thus \(C_t\) is only a pruning heuristic.

We now define this score. Fix an active check \(i\in\Gamma_t\). Let \(U_t(i)\)
be the set of unprocessed variables that still touch check \(i\), and draw each
future local variable \(X_j\) independently from its prior \(p_j\). The future
tail contributes the row parity
\begin{equation}
    B_{i,t}
    =
    \bigoplus_{j\in U_t(i)} h_{j,i}(X_j).
\end{equation}
A state with active residual bit \((q_t)_i\) can be completed on this row only
if the future tail supplies parity \(B_{i,t}=(q_t)_i\).

For one future variable, define the parity moment
\begin{equation}
    \mu_{j,i}
    =
    \mathbb E\!\left[(-1)^{h_{j,i}(X_j)}\right]
    =
    \sum_{x_j\in\A_j}p_j(x_j)(-1)^{h_{j,i}(x_j)}.
\end{equation}
Equivalently, \(\mu_{j,i}=1-2\theta_{j,i}\), where
\begin{equation}
    \theta_{j,i}
    =
    \Pr\!\left[h_{j,i}(X_j)=1\right].
\end{equation}
Because parities multiply in the \((-1)^{(\cdot)}\) representation, the row
moment is
\begin{equation}
    \eta_{i,t}
    =
    \mathbb E\!\left[(-1)^{B_{i,t}}\right]
    =
    \prod_{j\in U_t(i)}\mu_{j,i}.
\end{equation}
The two probabilities of the binary parity \(B_{i,t}\) are then recovered from
this moment as
\begin{align}
    \rho_{i,t}(b)
    &=
    \Pr[B_{i,t}=b]\nonumber\\
    &=
    \frac{1}{2}\left(1+(-1)^b\eta_{i,t}\right),
    \qquad b\in\{0,1\}.
\end{align}
The suffix-compatibility score assigned to a residual vector \(q_t\) is the sum
of the row log-compatibilities:
\begin{equation}
    C_t(q_t)
    =
    \sum_{i\in\Gamma_t}\log \rho_{i,t}((q_t)_i).
    \label{eq:suffix_score}
\end{equation}

The retained frontier is
\begin{align}
    S_t^\star&=\max_{\sigma\in\widetilde{\mathcal F}_t}S_t(\sigma),\nonumber\\
    \mathcal F_t
    &=
    \operatorname{TopK}
    \left\{
      \sigma\in\widetilde{\mathcal F}_t:
      S_t(\sigma)\ge S_t^\star-\Delta
    \right\}.
    \label{eq:pruning}
\end{align}
Here \(\operatorname{TopK}\) means that, if more than \(K\) states pass the
score-gap test, only the \(K\) states with largest pruning score are retained.
Thus \(\Delta\) controls the score gap, while \(K\) is a hard memory cap. In
practice, we choose \(K\) much larger than the typical retained list size, and
the behaviour of the decoder is mostly determined by \(\Delta\).

A transition including branching, merging, scoring and pruning is illustrated
in Fig.~\ref{fig:algorithm}.

\begin{figure}[htp]
\begin{center}
\fbox{\begin{minipage}{0.92\linewidth}
\textbf{Algorithm 1: \frontier{} decoding for a fixed order.}
\smallskip

\textbf{Input:} factor model \((\A_j,p_j,h_j,\ell_j)\), syndrome \(s\), order
\(\pi\), cap \(K\), gap \(\Delta\), suffix weight \(\alpha\).\par
\textbf{Initialize:} \(\mathcal F_0=\{(q_0=0,\lambda=0,P=0,c=\varnothing)\}\),
where \(c\) is an optional representative prefix.\par
\textbf{For} \(t=1,\ldots,n\):
\begin{enumerate}[leftmargin=*]
    \item Let \(j=\pi(t)\).
    \item Generate children for every \(\sigma\in\mathcal F_{t-1}\) and every
    \(x\in\A_j\).
    \item Update the active residual syndrome, logical label, prefix log mass,
    and representative prefix \(c\).
    \item Discard children that give the wrong value on a newly completed check.
    \item Merge children with identical active residual syndrome and logical
    label by summing their masses, keeping one arbitrary representative prefix.
    \item Compute \(S_t(\sigma)\) for every state
    \(\sigma\in\widetilde{\mathcal F}_t\).
    \item Apply the score-gap rule and the cap \(K\) to obtain
    \(\mathcal F_t\).
\end{enumerate}
\textbf{Terminal:} aggregate the retained terminal states by logical class and
output the class with the largest retained mass, together with the corresponding
representative prefix if a correction is needed.
\end{minipage}}
\end{center}
\end{figure}

\begin{figure*}[htp]
\centering
\begin{minipage}[t]{0.45\textwidth}
\centering
\vspace{0pt}
\includegraphics[width=0.96\linewidth]{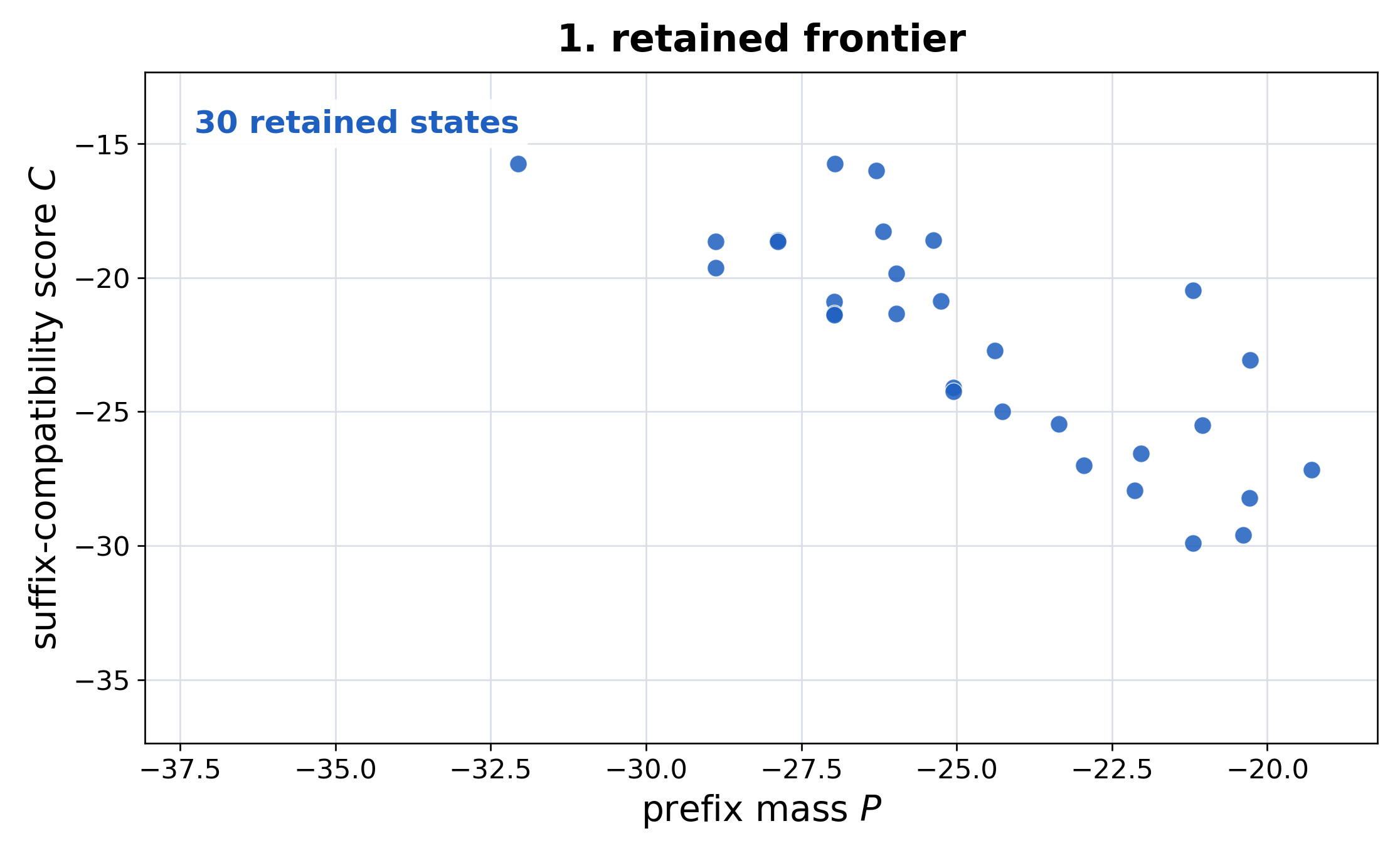}
\end{minipage}\hfill
\begin{minipage}[t]{0.45\textwidth}
\centering
\vspace{0pt}
\includegraphics[width=0.96\linewidth]{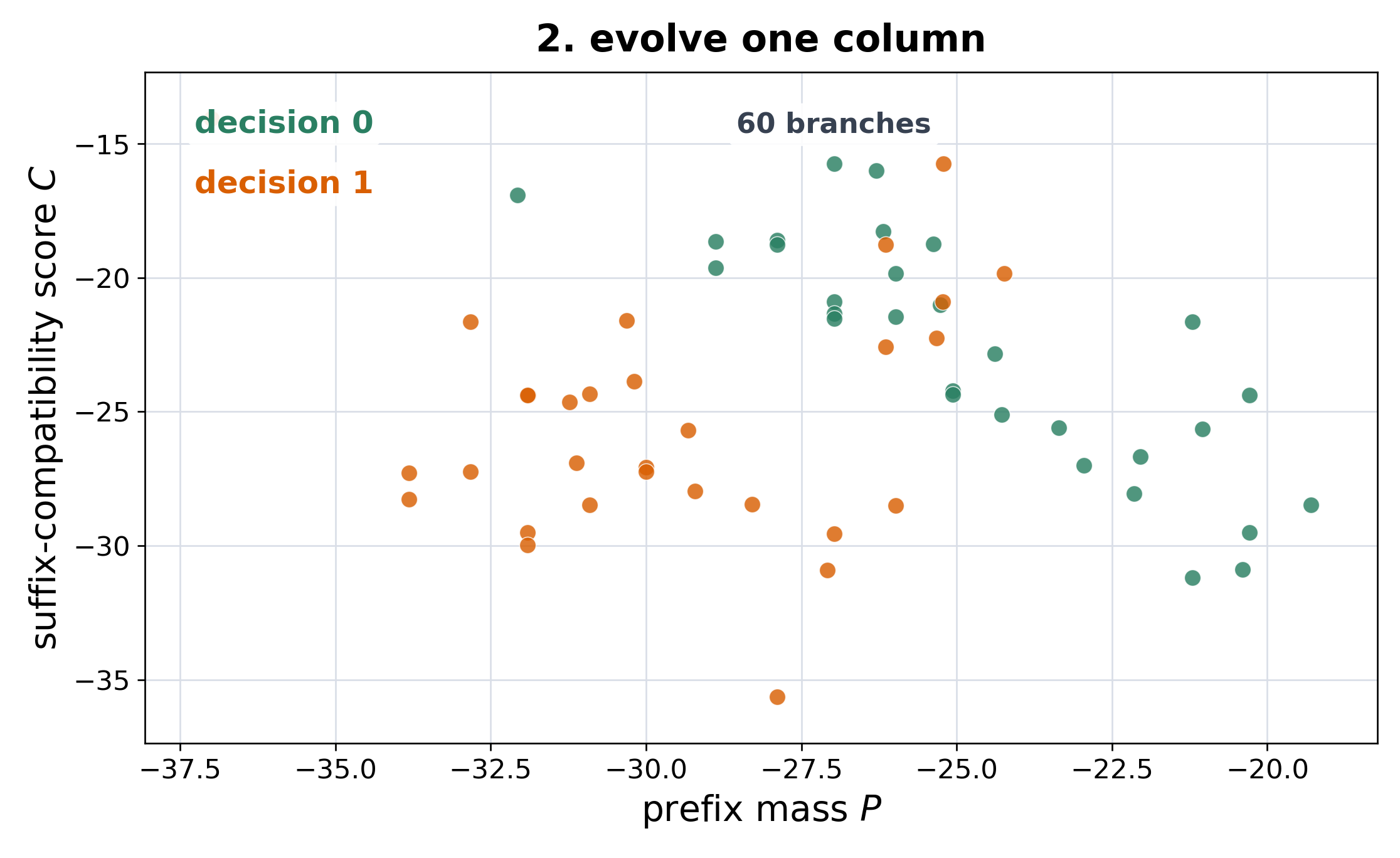}
\end{minipage}

\vspace{0.25em}

\begin{minipage}[t]{0.45\textwidth}
\centering
\vspace{0pt}
\includegraphics[width=0.96\linewidth]{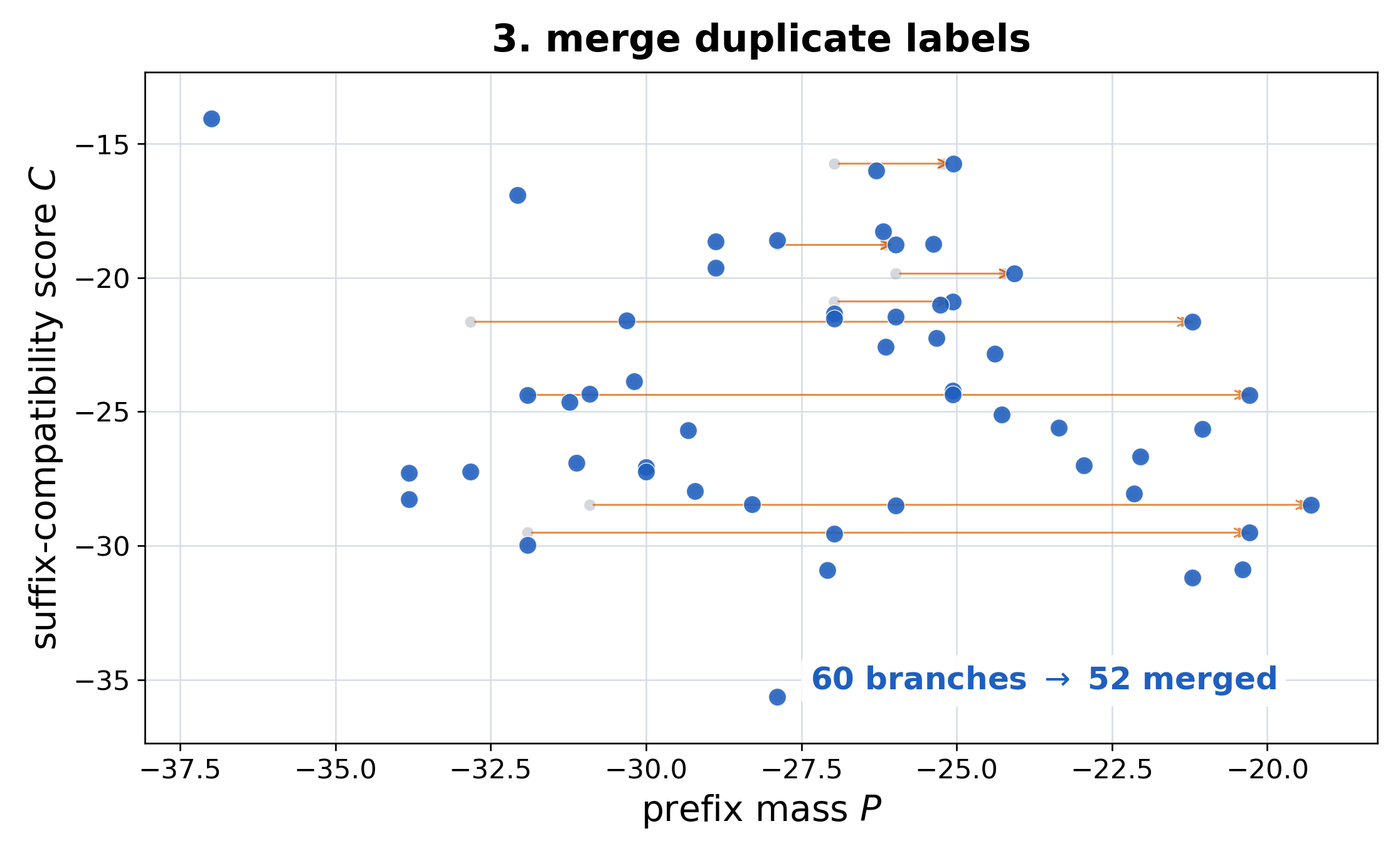}
\end{minipage}\hfill
\begin{minipage}[t]{0.45\textwidth}
\centering
\vspace{0pt}
\includegraphics[width=0.96\linewidth]{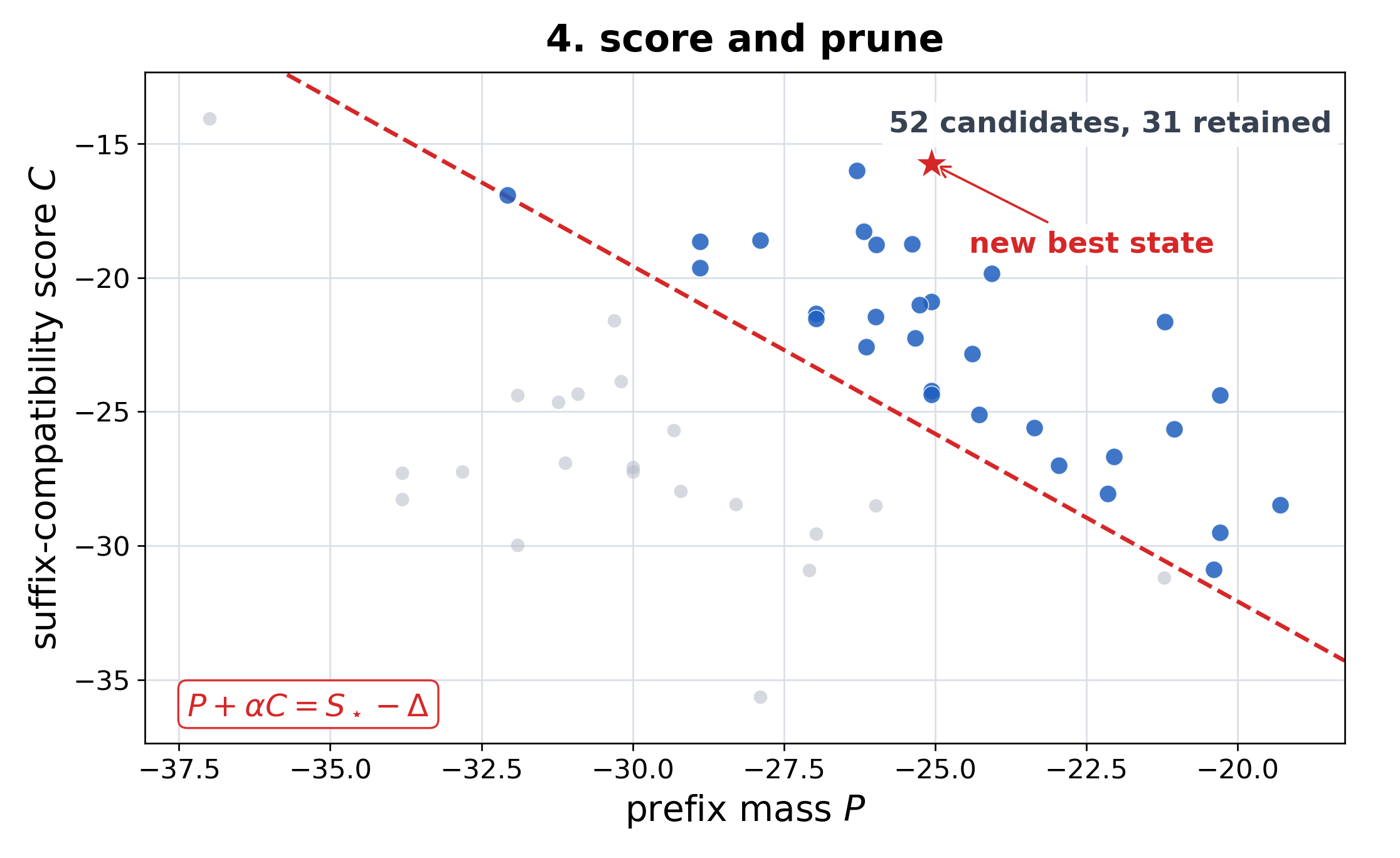}
\end{minipage}
\caption{One transition of the pruned ordered-frontier recursion on a BB code \( [[72,12,6]]\) under circuit-level noise. Retained states branch on the next binary detector variable, duplicate active-boundary labels are merged by summing their masses, and the merged candidates are scored by \(S=P+\alpha C\) before score-gap pruning.  In this boundary, the state count changes \(30\to60\to52\to31\).}
\label{fig:algorithm}
\end{figure*}

\subsection{Terminal logical decision}

After all variables have been processed, every check has been completed. The
decoder aggregates the retained terminal states by logical class. The retained
mass of logical class \(\lambda\) is
\begin{equation}
    \log \widehat Z^{(\mathcal F)}_\lambda(s)
    =
    \log\left(
    \sum_{\substack{\sigma\in\mathcal F_n\\ \lambda(\sigma)=\lambda}}
    \exp(P_n(\sigma))
    \right).
\end{equation}
The decoder outputs
\begin{equation}
    \widehat\lambda=\argmax_\lambda \widehat Z^{(\mathcal F)}_\lambda(s).
\end{equation}

The logical labels are not enumerated exhaustively in advance. Instead, each
retained survivor carries its accumulated logical label, and terminal
aggregation is performed only over the logical classes reached by the retained
frontier \(\mathcal F_n\).

\section{Design choices}
\label{sec:design}

\subsection{Column ordering}

We find that the decoder performance strongly depends on the choice of ordering $\pi$. Good orderings aim at keeping the active set of rows fairly small and at exposing contradictions early to avoid the need to carry a large frontier. A strategy that works well is a deadline order: variables are placed so that syndrome rows close early and the number of active constraints remains controlled.  The deadline of a row, denoted \(d_i\) below, is the position of its last incident variable under the current order.
The implementation used in the experiments starts from the natural order (e.g.~the temporal order of the fault locations in a quantum memory experiment) and then sorts variables by the earliest row deadline they can help close.  More explicitly, let \(R_j\subseteq\{1,\ldots,m\}\) be the detector/check rows touched by variable \(j\), and let \(r(j)\) be the position of \(j\) in the natural order.  For each touched row \(i\), define
\begin{equation}
    f_i=\min_{j:\ i\in R_j} r(j),
    \qquad
    d_i=\max_{j:\ i\in R_j} r(j).
\end{equation}
The forward deadline order sorts variables by the lexicographic key
\begin{equation}
    \kappa(j)=
    \left(
      \min_{i\in R_j} d_i,\,
      \max_{i\in R_j} d_i,\,
      \min_{i\in R_j} f_i,\,
      r(j),\,
      j
    \right),
    \label{eq:deadline_key}
\end{equation}
with the convention that a variable with \(R_j=\varnothing\) is assigned \(+\infty\) for the first three components, so that it is placed after variables that touch detector/check rows.  The first component prioritizes variables that can finish a row whose last possible incident variable is early.  The second and third components are deterministic tie breakers that avoid needlessly delaying columns whose support extends later, while preserving the original order as a final tie breaker.
This ordering is not a global optimization. In particular, applying the same rule in reverse coordinates gives a useful backward ordering; the committee variants below run both forward and backward deadline-ordered decoders.

\begin{figure}[htp]
\begin{center}
\fbox{\begin{minipage}{0.92\linewidth}
\textbf{Algorithm 2: Deadline ordering.}
\smallskip

\textbf{Input:} natural variable order, detector/check supports \(R_j\).\par
\textbf{For each} row \(i\): compute first touch \(f_i\) and last touch \(d_i\) in the natural order.\par
\textbf{For each} variable \(j\): compute the key \(\kappa(j)\) from Eq.~\eqref{eq:deadline_key}.\par
\textbf{Return:} variables sorted by increasing \(\kappa(j)\).  A backward deadline order is a separate permutation: replace \(r(j)\) by the reversed position \(n+1-r(j)\), compute the same key in those reverse coordinates, and sort by that key.
\end{minipage}}
\end{center}
\end{figure}

\subsection{Committee decoding}

Committee decoding is an optional extension after the single-order decoder has
been defined. The simplest committee runs two decoder instances, for example one
with the forward deadline order and one with the backward deadline order.

One selection rule is to compare the retained terminal logical-class masses
returned by the committee members and choose the class with the largest retained
mass. This is the rule we use in our simulations, with the forward/backward deadline order committee.

Another rule is disagreement escalation: if the committee members disagree,
rerun the decoder with larger \(K\) and \(\Delta\), up to a configured cap.

\section{Simulation results}
\label{sec:results}

Our primary accuracy metric is full logical frame error rate (FER). A frame
fails if the decoder outputs the wrong logical class, or if the final retained
list is empty. An empty final list means that the decoder did not retain any
error configuration with the observed syndrome. In our simulations, the
results use a committee of the forward and backward deadline orders: both orders are run on the same syndrome and the returned logical class with larger retained
terminal mass is selected.
All uncertainty intervals shown are Wilson 95\% confidence intervals computed
from frame counts. 

\subsection{Code-capacity threshold benchmarks}
\label{sec:surface}

The first code-capacity benchmark is the rotated surface code under depolarizing noise.  The correlated Pauli decoder uses one variable per qubit with alphabet \(\{I,X,Y,Z\}\) and prior \((1-p,p/3,p/3,p/3)\), so it keeps the \(X/Z\) correlation induced by \(Y\) errors. Figure~\ref{fig:surface_threshold} shows the performance of the \frontier{} decoder for \(\Delta=10\) and cap values $K$ increasing with the code size. The crossing window is close to the known depolarizing code-capacity maximum-likelihood threshold scale \(p\approx0.189\)~\cite{Bombin2012DepolarizationThreshold}.

\begin{figure*}[!tp]
\centering
\begin{minipage}[t]{0.49\textwidth}
\centering
\vspace{0pt}
\includegraphics[width=\linewidth]{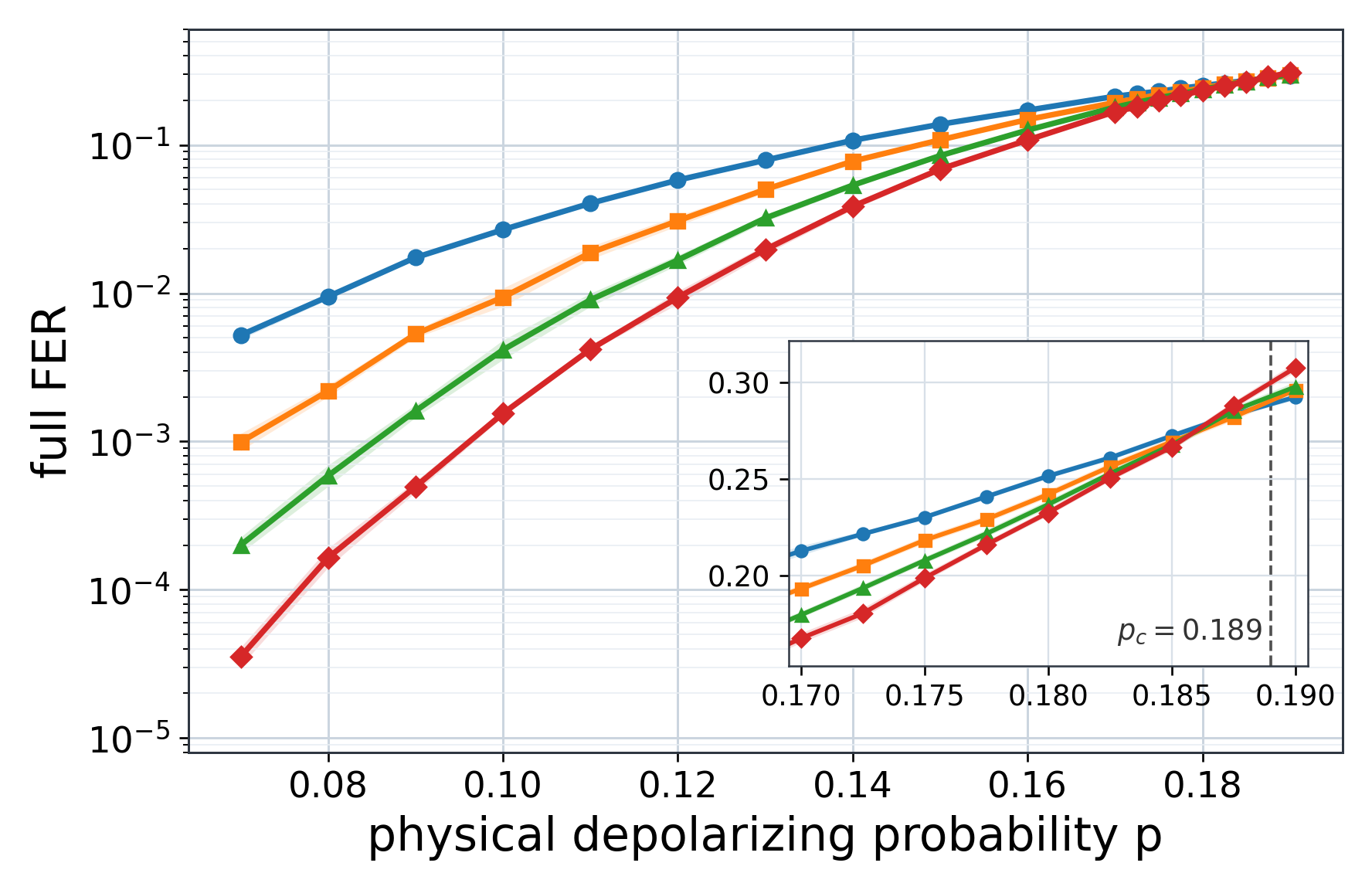}
\end{minipage}\hfill
\begin{minipage}[t]{0.49\textwidth}
\centering
\vspace{0pt}
\includegraphics[width=\linewidth]{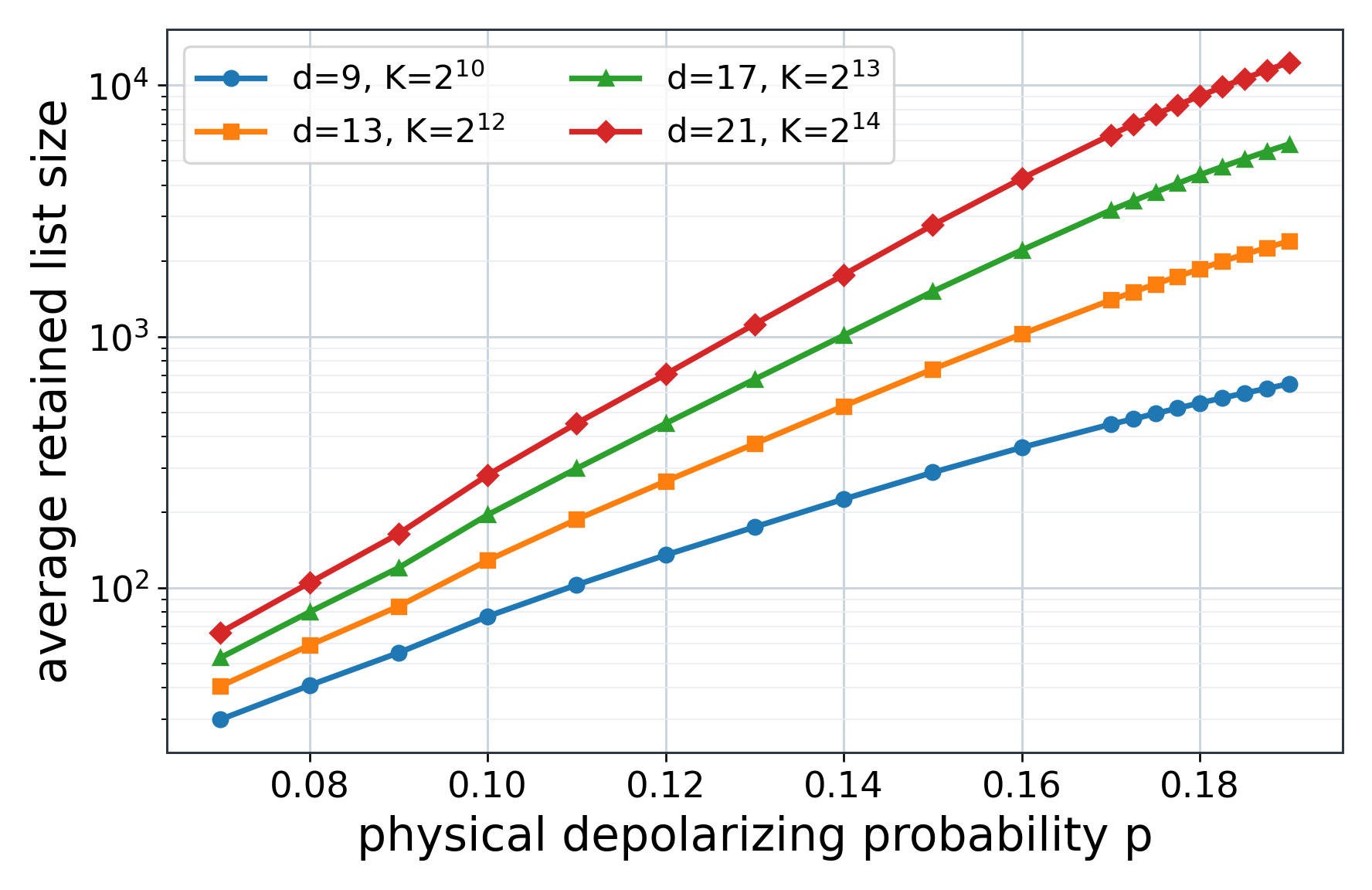}
\end{minipage}
\caption{Rotated-surface code-capacity depolarizing-noise behavior for correlated Pauli \frontier{} decoding.  Left: full FER versus physical depolarizing probability \(p\).  The dashed line marks \(p_c=0.189\), the optimal threshold obtained by computing the critical probability of a statistical-mechanics model~\cite{Bombin2012DepolarizationThreshold}.  Right: average retained list size $\frac{1}{n} \sum_t |\mathcal{F}_t|$ for the same setting.  Each qubit is represented by one \(I,X,Y,Z\) factor, all decoders use \(\Delta=10\), \(\alpha=0.8\), and a cap value $K$ that depends on the code distance.}
\label{fig:surface_threshold}
\end{figure*}

It is useful to compare the retained-list size in Fig.~\ref{fig:surface_threshold}
with the boundary size used in tensor-network decoders. 
A tensor-network decoder controls an MPS bond dimension $\chi$, and for a surface code of distance $d$, an MPS boundary contains $O(d\chi^2)$ scalar parameters. The bond dimension is typically a few dozen, for instance $\chi=48$ in \cite{Chubb2021GeneralTNDecoding}, albeit for larger distances. These numbers are comparable to the values of the average frontier size observed in Fig.~\ref{fig:surface_threshold} and suggest that both decoders find different efficient descriptions of the boundary of roughly the same complexity, either a
sparse list of residual-syndrome/logical labels with their accumulated masses for \frontier{}, or a dense low-rank boundary state for tensor-network decoding.

Our second code-capacity benchmark is the hexagonal color code under bit-flip
noise. Figure~\ref{fig:color_threshold} reports threshold curves and average
retained list sizes for distances \(d=9,13,17,21\). Again the results are consistent with the claim that the \frontier{} decoder approximates optimal decoding when the parameters $\Delta$ and $K$ are sufficiently large. The curves show that the
ordered-frontier approximation is not tied to surface-code geometry: the same
decoder is applied to a different parity-check matrix by changing the column
order and active-row boundary, not by designing a new two-dimensional
contraction geometry.

\begin{figure*}[!tp]
\centering
\begin{minipage}[t]{0.49\textwidth}
\vspace{0pt}
\centering
\includegraphics[width=\linewidth]{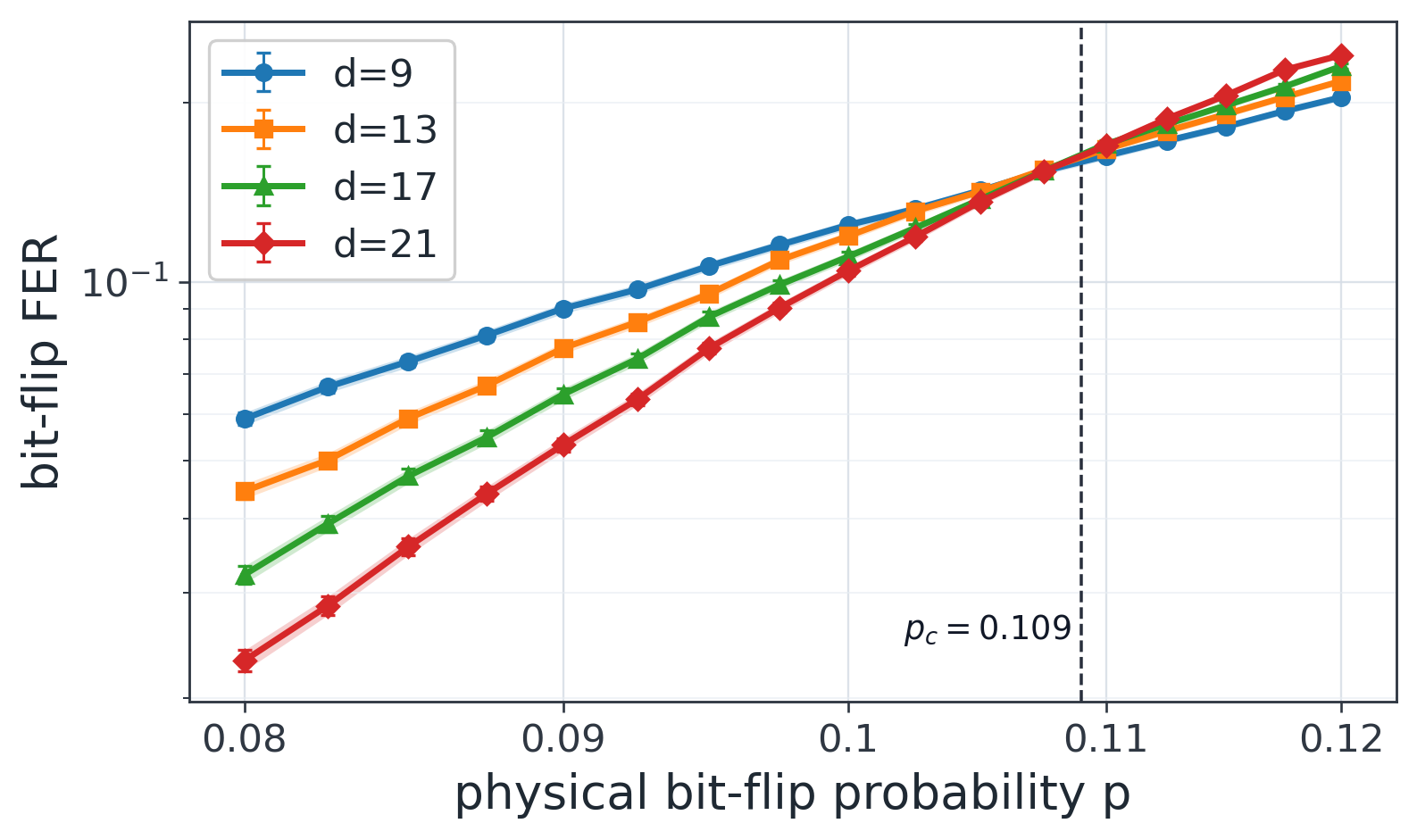}
\end{minipage}\hfill
\begin{minipage}[t]{0.49\textwidth}
\vspace{0pt}
\centering
\includegraphics[width=\linewidth]{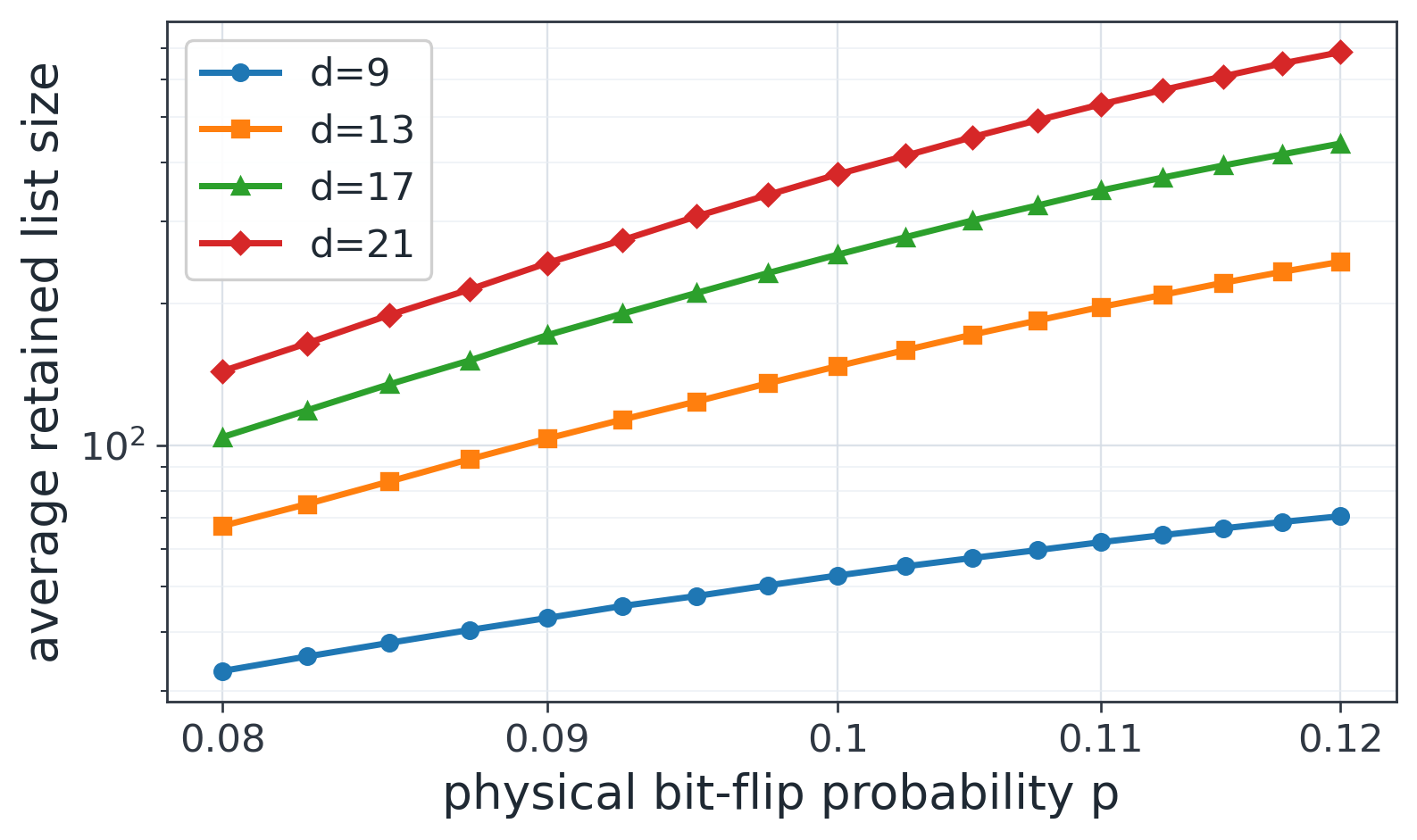}
\end{minipage}
\caption{Hexagonal color-code code-capacity bit-flip benchmark for
\frontier{} decoding. Left: FER versus physical bit-flip probability \(p\). Right: average retained list size, $\frac{1}{n} \sum_t |\mathcal{F}_t|$ for the same setting. The \frontier{} decoder uses \(\Delta=10\) and \(K=1024\) for all codes. The vertical dashed line the optimal threshold, obtained from a statistical-mechanics argument \(p_c\simeq0.109\)~\cite{KatzgraberBombinMartinDelgado2009ColorCodeThreshold}.}
\label{fig:color_threshold}
\end{figure*}

\subsection{Surface code under circuit-level noise}
\label{sec:surface-dem}

Figure~\ref{fig:surface_memory_z_dem_mwpm} compares \frontier{} decoding against correlated minimum-weight perfect matching~\cite{Higgott2022PyMatching,HiggottGidney2025SparseBlossom,PyMatchingSoftware} for the rotated surface code under depolarizing circuit-level noise at $p=0.002$. The plot displays the results for a memory-\(Z\) experiment rather than considering both $X$ and $Z$ decoding. 
Each experiment considers $d$ noisy syndrome extraction rounds and reports the corresponding FER per round.  

It is interesting to fit the per-round logical
error rate
\begin{equation}
    P_{\mathrm L}(d)
    \propto
     \Lambda^{-\left\lfloor (d+1)/2 \right\rfloor}
\end{equation}
in order to evaluate the error-suppression coefficient $\Lambda$. We find a value around $8.6$ for \frontier{} and $7.9$ for correlated MWPM, compatible with the figure reported in \cite{Gu2026Cascade} that also obtained $9$ for Tesseract (long beam) and $8.4$ for the neural decoder Cascade. It is likely that increasing $\Delta$ and $K$ to larger values for $d=11$ would improve the error-suppression coefficient of the \frontier{} decoder, but our current implementation is too slow to test this regime.

\begin{figure}[!tp]
\centering
\includegraphics[width=\linewidth]{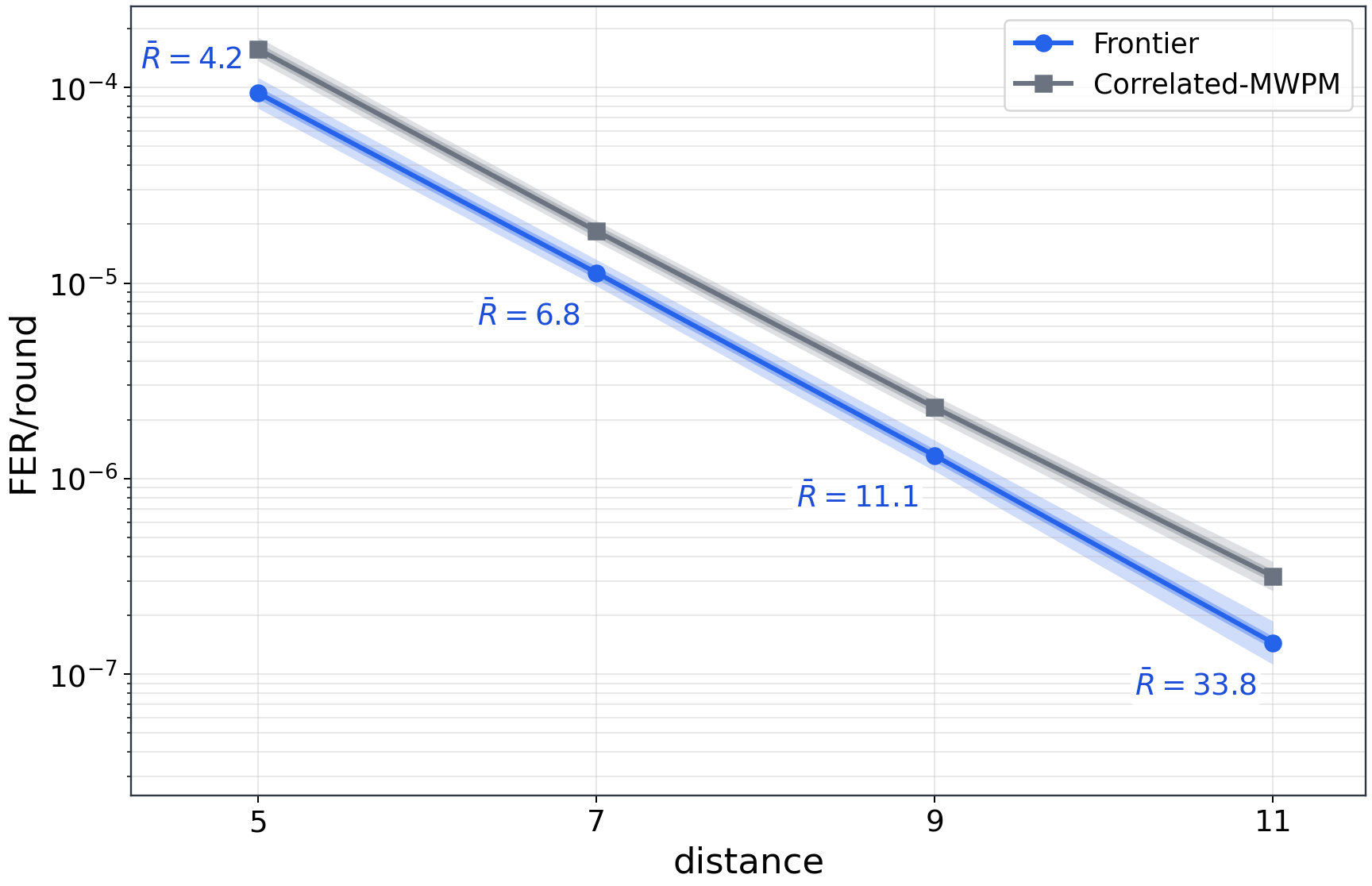}
\caption{Rotated-surface memory-\(Z\) memory experiment with $d$ syndrome extraction rounds and \(p=0.002\): logical FER per round versus code distance for \frontier{} and correlated minimum-weight perfect matching~\cite{Higgott2022PyMatching,HiggottGidney2025SparseBlossom,PyMatchingSoftware}. The decoder settings are $\Delta=10, K=1024$ for $d=5,7,9$ and $\Delta=12, K=2048$ for $d=11$.
Blue point labels report the estimated average retained frontier size \(\bar R=\frac{1}{n}\sum_{t=0}^{n-1}|\mathcal F_t|\), averaged over sampled shots and the two directional decoder passes.}
\label{fig:surface_memory_z_dem_mwpm}
\end{figure}

\subsection{BB codes under circuit-level noise}
\label{sec:bb-detector}

We ran simulations for two bivariate bicycle (BB) codes~\cite{Bravyi2024BB} with parameters $[[72,12,6]]$ and $[[144,12,12]]$. In both cases, we consider circuit-level depolarizing noise and respectively 6 and 12 rounds of syndrome extraction. We report full FER including both $X$- and $Z$-decoding, and do not normalize by round nor by logical qubit count. 
We compare the \frontier{} decoder to the beam search decoder \cite{YeWeckerDelfosse2025BeamSearch} in the configuration \texttt{beam64$\_$640iters} and Tesseract~\cite{AghababaieBeni2025Tesseract} with the short beam configuration. 
The DEM matrices have respective sizes \(H_X=H_Z=252\times2232\) and \(H_X=H_Z=936\times8784\), and the logical matrices have size  \(L_X=L_Z=12\times2232\) and \(L_X=L_Z=12\times8784\).

Figure~\ref{fig:bb_dem_circuit} reports both BB code results. The top row shows the $[[72,12,6]]$ code with the FER-vs-\(p\) comparison and the \(p=0.002\) list-size sweep. The bottom row shows the gross code $[[144,12,12]]$, where the list-size sweep is performed for $p=0.001$.
The main result of these simulations is that useful FER values are obtained with a very small retained frontier, at least in the regime of interest for fault-tolerant quantum computing, i.e.~for $p$ on the order of $10^{-3}$.

Figure~\ref{fig:gross_dem_avg_retained} compares the peak frontier size $\max_t |\mathcal{F}_t|$ and the average $\frac{1}{n} \sum_t |\mathcal{F}_t|$. The values are obtained by averaging these quantities over 5000 shots. For $p=0.001$, the figure shows that the peak frontier size during a shot may be approximately 5 times larger than the average size, confirming the usefulness of the parameter $\Delta$ to adapt the frontier size along the decoding. Without $\Delta$-adaptive pruning, one should instead keep a frontier with the peak size all along in order to obtain a similar accuracy.

\begin{figure*}[!tp]
\centering
\begin{minipage}[t]{0.49\textwidth}
\centering
\vspace{0pt}
\includegraphics[width=\linewidth]{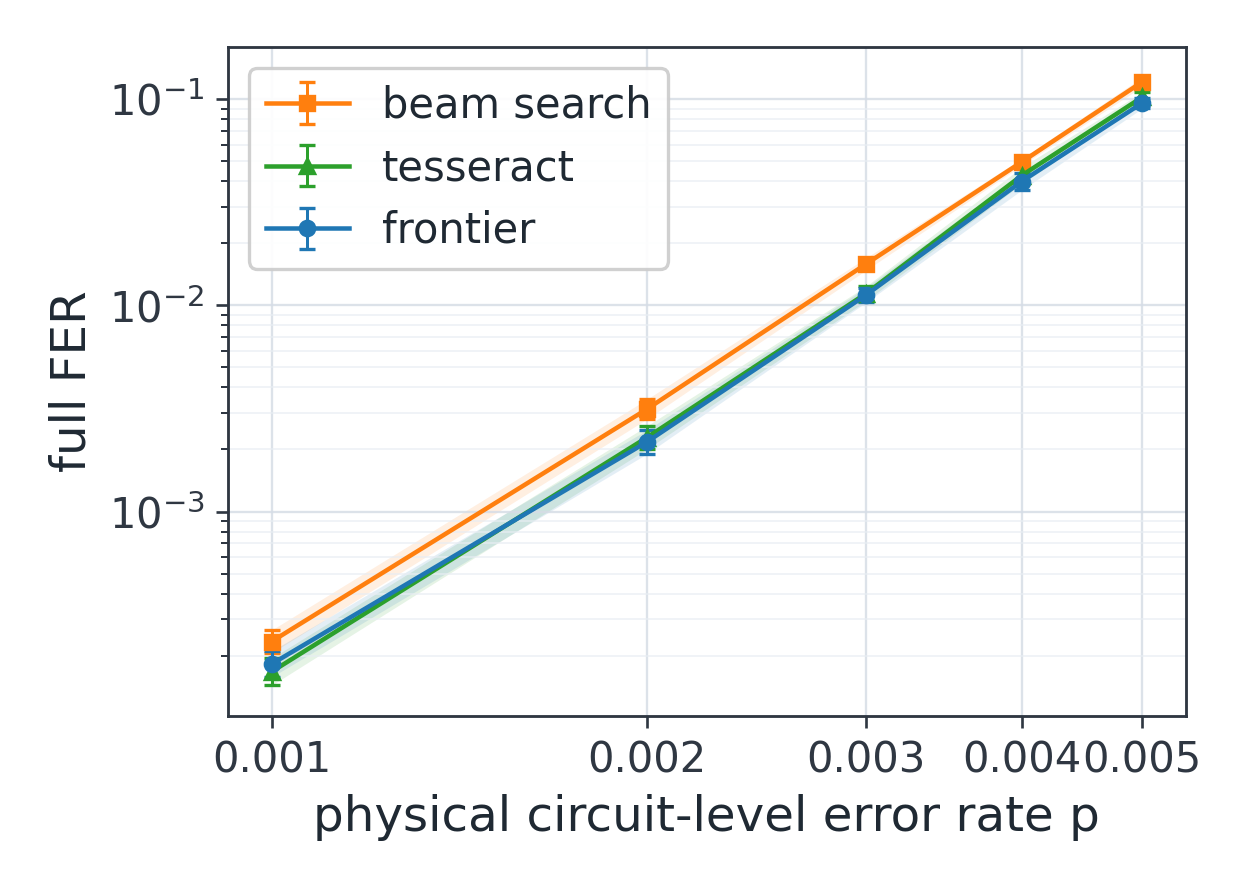}
\end{minipage}\hfill
\begin{minipage}[t]{0.49\textwidth}
\centering
\vspace{0pt}
\includegraphics[width=\linewidth]{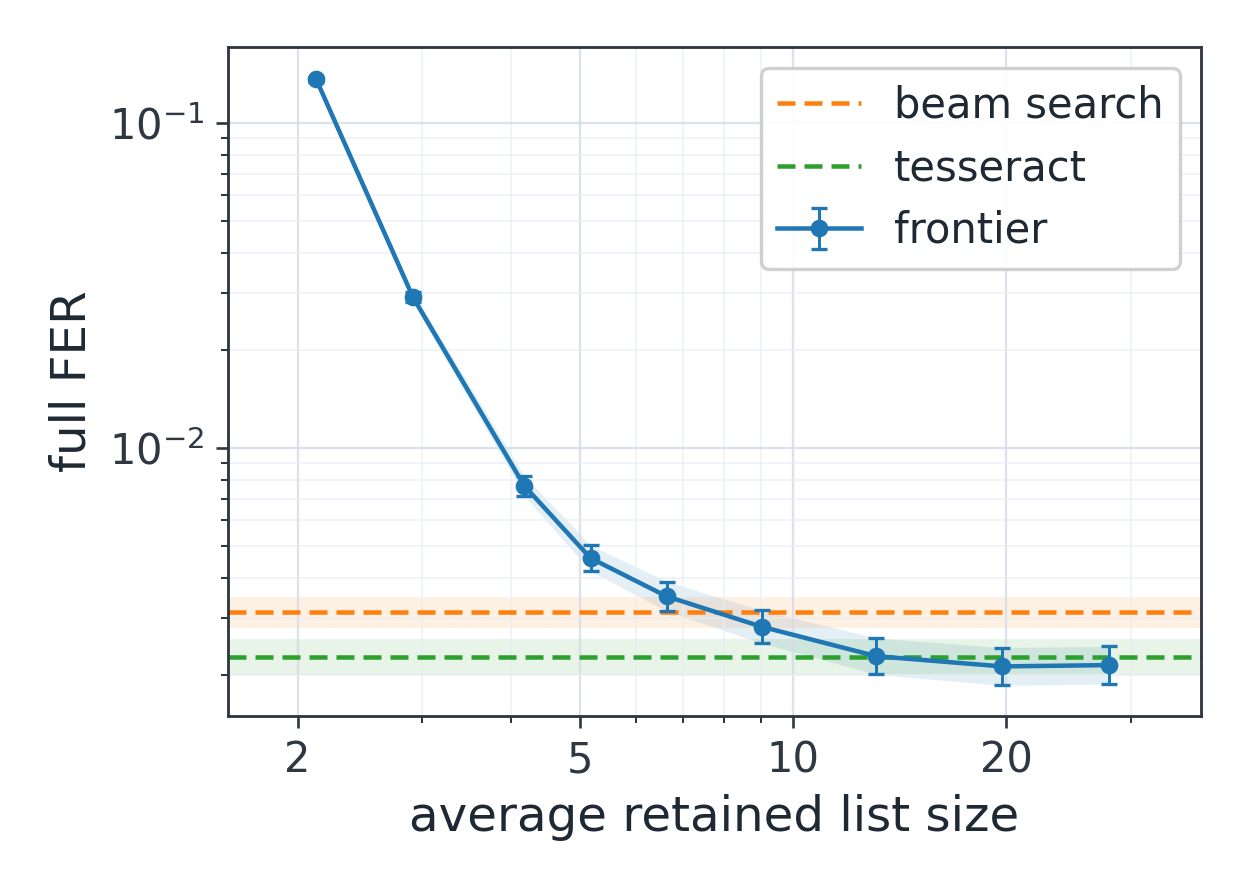}
\end{minipage}
\vspace{0.8em}

\begin{minipage}[t]{0.49\textwidth}
\centering
\vspace{0pt}
\includegraphics[width=\linewidth]{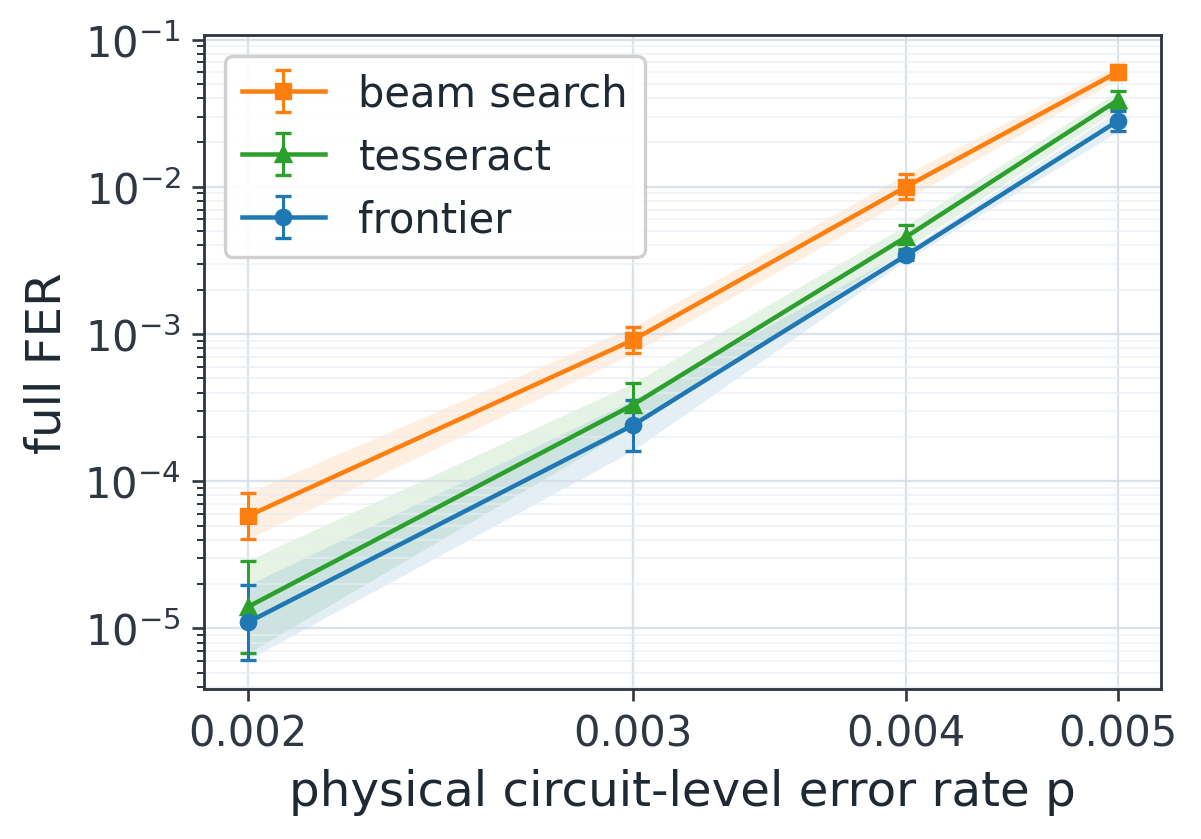}
\end{minipage}\hfill
\begin{minipage}[t]{0.49\textwidth}
\centering
\vspace{0pt}
\includegraphics[width=\linewidth]{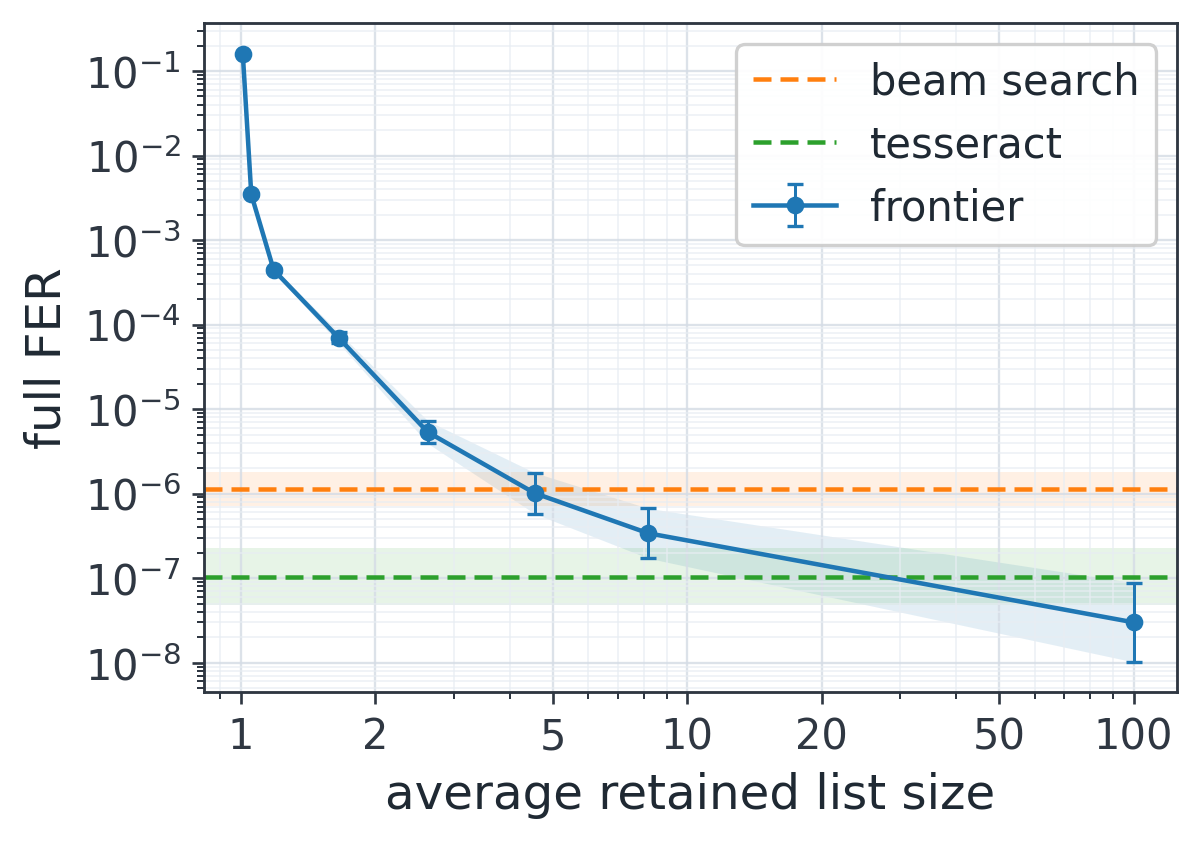}
\end{minipage}
\caption{BB circuit-level benchmarks. Top: BB code $[[72,12,6]]$, with full FER versus physical circuit-level error rate \(p\) on the left and full FER at \(p=0.002\) versus average retained list size for a \(K=4096\), \(\Delta=2,\ldots,10\) sweep on the right. Bottom: gross code $[[144,12,12]]$, with full FER versus \(p\) on the left and full FER at \(p=0.001\) versus average retained list size on the right; the cap is \(K=2^{14}\) for \(\Delta=20\) and \(K=2^{13}\) for \(2\leq\Delta\leq14\). Both rows compare against the beam search decoder~\cite{YeWeckerDelfosse2025BeamSearch} in the \texttt{beam64$\_$640iters} configuration and Tesseract~\cite{AghababaieBeni2025Tesseract} with the short beam configuration.}
\label{fig:bb_dem_circuit}
\end{figure*}

\begin{figure}[!tp]
\centering
\includegraphics[width=\linewidth]{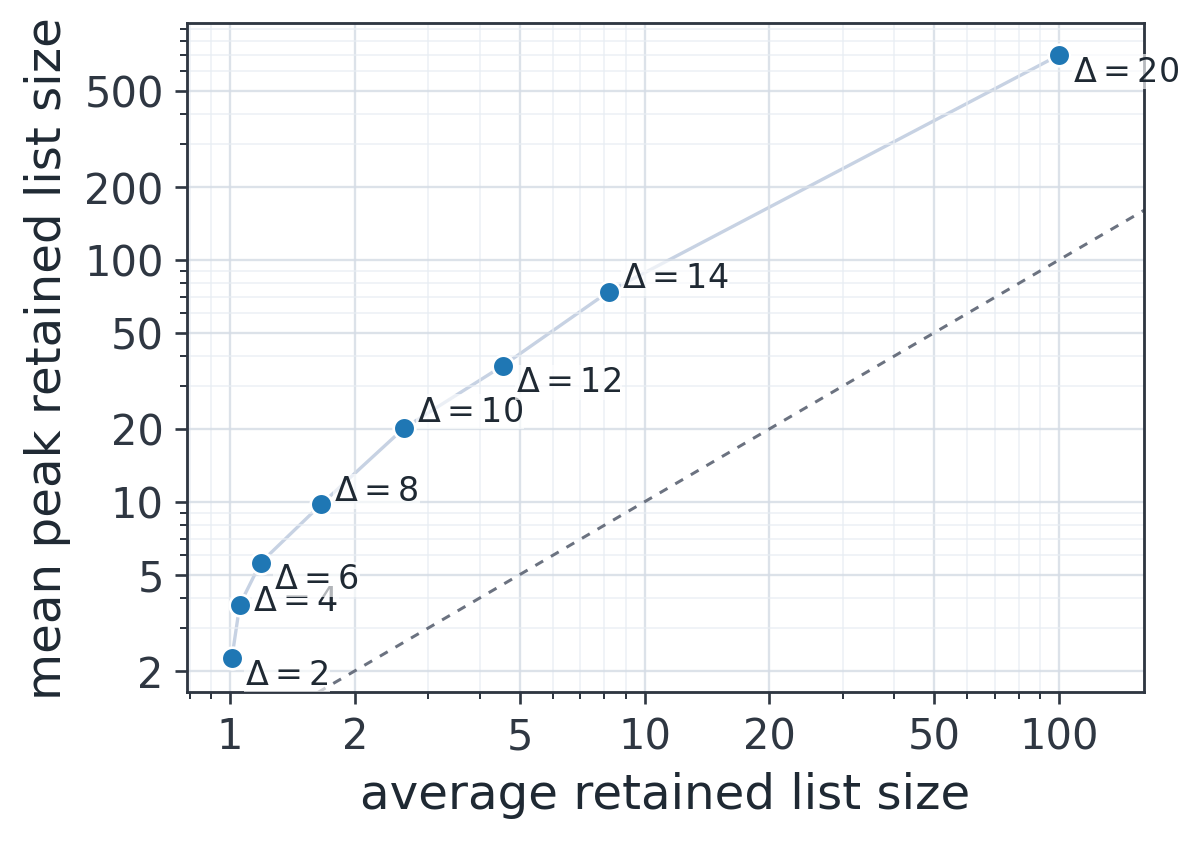}
\caption{Comparison between the peak frontier size and the average frontier size during a decoding of the gross code under circuit-level noise at $p=0.001$. The cap value is $K=2^{13}$ for all points, except for $\Delta=20$ where it is $2^{14}$.}
\label{fig:gross_dem_avg_retained}
\end{figure}

Figure~\ref{fig:transition_evals} shows the tail distribution of transition evaluations for the gross code under circuit-level noise with \(p=0.001\). Given that the complexity of the decoder is approximately linear in the number of transition evaluations, this figure shows the latency tails one can expect from a fast implementation of the \frontier{} decoder. The decoder settings here correspond to the strongest version of the decoder displayed in Fig.~\ref{fig:bb_dem_circuit}, namely $\Delta=20, K=2^{14}$. We see that the median number of transition evaluations is around $6 \times 10^6$, while the $99.9$th percentile is around $19 \times 10^6$, which is about three times larger. This suggests that a fast implementation of the decoder should not suffer from large latency tails.

\begin{figure}[!tp]
\centering
\includegraphics[width=\linewidth]{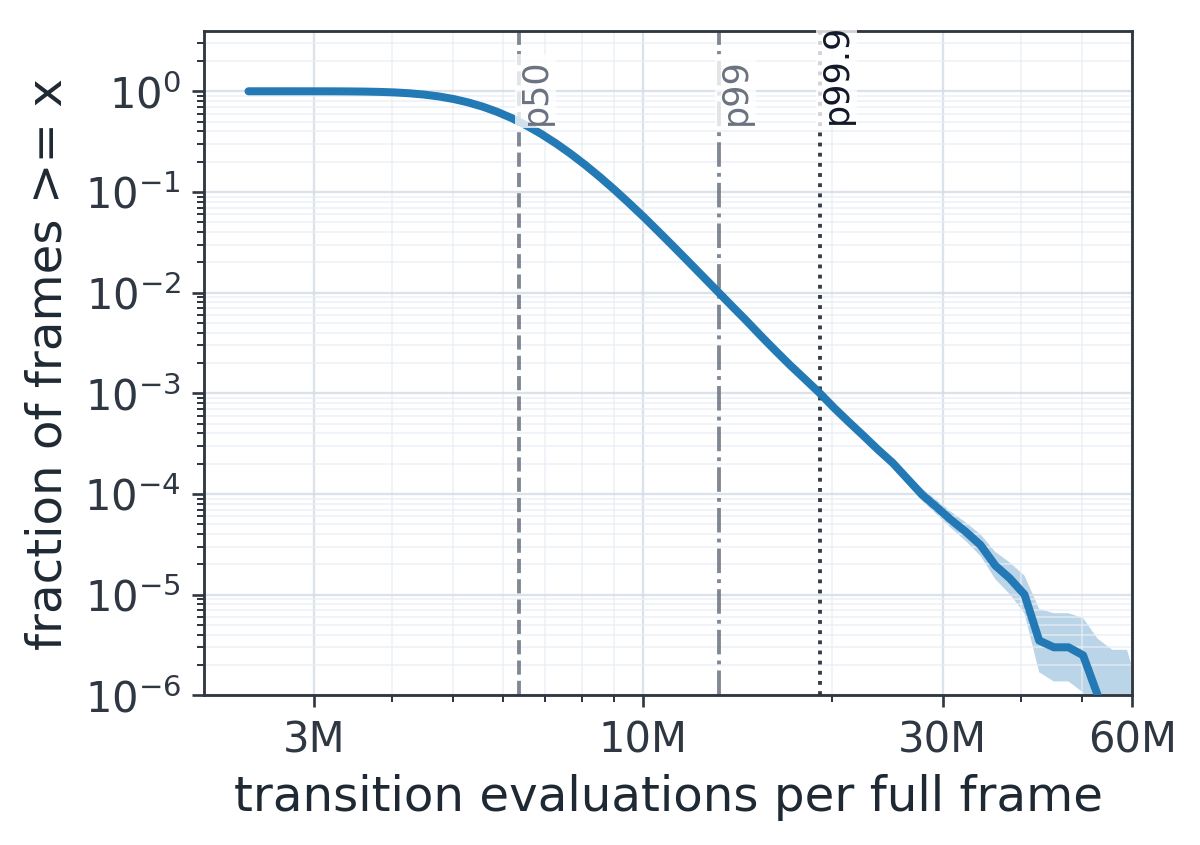}
\caption{Transition-evaluation tail curve for the gross code under depolarizing circuit-level noise $p=0.001$ with \(K=16384\), \(\Delta=20\).  A transition evaluation is one candidate extension before merging and pruning.}
\label{fig:transition_evals}
\end{figure}

\begin{figure}[!tp]
\centering
\includegraphics[width=\linewidth]{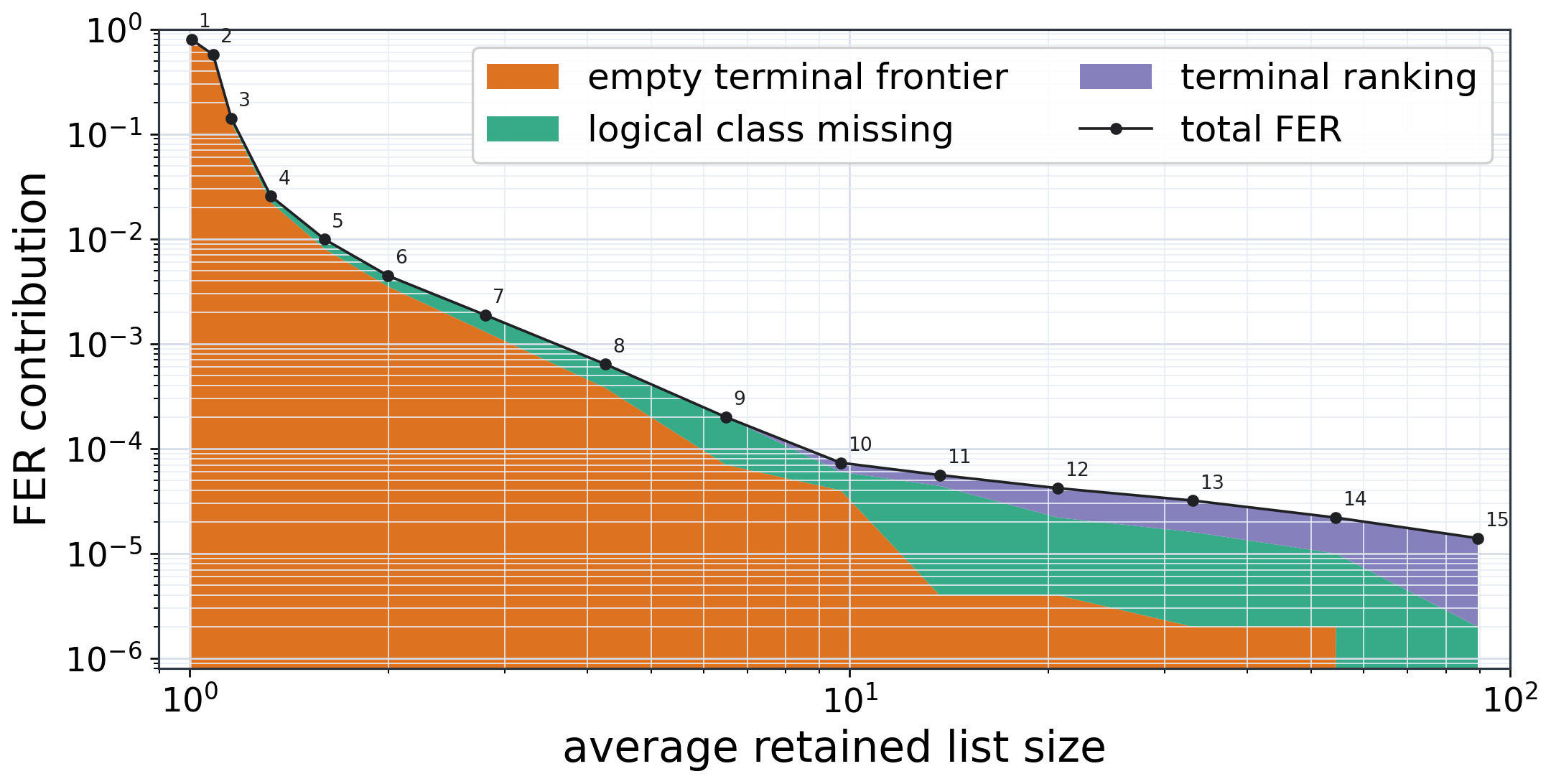}
\caption{Failure decomposition of \frontier{} for the gross code, under depolarizing circuit-level noise with $p=0.002$. The cap is fixed to $K=2^{14}$ for $\Delta \leq 12$ and $K=2^{14}$ for $\Delta =15$. 
Failures are classified into ``no path'' corresponding to shots where the terminal frontier is empty (no error with the correct syndrome was found); ``truth missing'' when the terminal list does not contain the true error; and ``bad ranking'' when the true error is in the terminal list, but not ranked first.}
\label{fig:failure_decomposition}
\end{figure}

Figure~\ref{fig:failure_decomposition} shows the failure decomposition for the gross code under circuit-level noise. 
There are three types of failures. The first one corresponds to an empty terminal frontier: this means that pruning was too aggressive and that all remaining states have the wrong final syndrome and are therefore removed from the terminal frontier. The second type corresponds to cases where the terminal frontier is not empty but does not contain the correct logical class. The final type of failure is when the correct logical class is present at the end, but not ranked first by the decoder, meaning that there is another class with a larger terminal mass. 
We find that increasing $\Delta$ (and the cap value $K$) quickly removes the first two types of failures. In particular, in practice, it is useful to set the decoder settings so that most remaining failures are bad terminal ranking. This typically leads to a good tradeoff between accuracy and complexity of the decoder.

\FloatBarrier

\section{Open problems}
\label{sec:problems}

We now list a number of possible avenues to improve the \frontier{} decoder further. 
The first one is to find a better column ordering, and maybe to rely on larger committees. For instance, one could try to take into account the weight or the prior of each column, or exploit local structure in the parity-check matrix. 
Note that for online decoders, which are of interest in order to go beyond simple memory experiments, it is very useful that the final ordering remains close to the temporal circuit ordering to avoid latency issues. This is the case of the deadline ordering for the codes we have tested. 

A second problem is to improve the current suffix-compatibility score. As explained, this score is a heuristic that completely ignores correlations between the rows of the parity-check matrix. We currently try to correct for this by adding a normalization factor $\alpha <1$, but it should be possible to do much better by taking detector correlations into account. In simulations, we have seen that considering pairwise correlations could improve the accuracy a little bit, but the tradeoff between complexity and accuracy led us to keep our simple heuristic.

Terminal ranking remains a bottleneck. Some failures are not caused by losing the right class during pruning; they are caused by ranking the terminal classes incorrectly. Better terminal aggregation, better calibration between prefix mass and best-path evidence, or a second-stage ranker on the retained terminal list may improve this step.
We note that when the final list contains several candidates, it is possible to compute a posterior gap. This information can be used either to accept the candidate or to decide that the result should not be trusted, and for instance rerun the decoder with larger values of $\Delta$ and $K$. 

Even before aggressive speed optimization, \frontier{} can serve as a
high-accuracy offline decoder for benchmarking faster decoders, identifying hard
syndromes, and diagnosing whether failures arise from support loss or terminal
ranking. The present work focuses on memory-style code-capacity and
detector-side DEM benchmarks. Extending the same ordered detector-picture
approach to logical circuits, calibrated hardware noise, and non-Pauli errors
is left for future work.

\section{Conclusion}

The \frontier{} decoder is based on a simple principle: order the variables,
keep the active boundary, merge equivalent prefixes, and prune by score gap,
using a hard cap only as a safeguard. Our simulations suggest
that this principle captures useful posterior structure in several quantum
decoding problems, notably for the surface code and for the color code in the code-capacity model, and for the surface code and BB codes under circuit-level noise.
In the latter case, \frontier{} is competitive with strong
heuristic decoders while using very compact retained lists.

The present results should be read as an initial algorithmic study, not as a
final optimized decoder. The main lesson from our experiments is that the useful object is
not a small set of individual errors, but a small set of boundary states.
This distinction is especially important for quantum decoding: many physical
errors may represent the same residual-syndrome/logical boundary state, and
\frontier{} sums their masses before pruning rather than forcing them to
compete as separate list entries. Degeneracy is therefore not only an ambiguity
to survive, but also a source of compression. The next steps are to improve the
boundary score, improve the terminal logical-class ranking, and understand
when a small ordered frontier captures most of the posterior mass needed for
the logical decision. This open problem for \frontier{} is
analogous to the bond-dimension question for tensor-network decoders, but with
a matrix-order boundary in place of a geometric cut: when do the parity-check
matrix, the noise model, and the ordering make the logical posterior
effectively low-width after degeneracy-induced merging?

\section*{Code availability}

The implementation of the frontier decoder used in this work is available at
\url{https://github.com/aleverrier/frontier}.

\begin{acknowledgments}
AL acknowledges the Plan France 2030 through the project ANR-22-PETQ-0006.
RU gratefully acknowledges the hospitality of the COSMIQ group at Inria, where this work was carried out during his sabbatical.

We acknowledge the use of large language models, in particular OpenAI Codex, to assist with the programming, documentation, and testing of the frontier decoder implementation.
\end{acknowledgments}

\bibliographystyle{quantumlinks}
\bibliography{frontier_decoder_refs}

\end{document}